\documentclass[10pt,a4paper]{article}

\usepackage[utf8]{inputenc} 
\usepackage[english]{babel} 

\usepackage{graphicx}
\usepackage{mathtools, cuted}
\usepackage{hyperref}

\usepackage{mathtools}

\usepackage{amsmath, amssymb}
\usepackage{amsthm}
\usepackage{amsmath}


\usepackage{appendix}


\usepackage{setspace}

\usepackage{amssymb}

\usepackage[a4paper, total={6in, 10in}]{geometry}


\usepackage{wrapfig}





\usepackage{chngcntr}
\usepackage{apptools}

\AtAppendix{\counterwithin{lemma}{section}}

\newtheorem{appxlem}{Lemma}[section]



\usepackage{color}

\usepackage{anyfontsize}


\usepackage{amsthm}



\newtheorem{theorem}{Theorem}[section]
\newtheorem{definition}[theorem]{Definition}

\newtheorem{proposition}[theorem]{Proposition} 
\newtheorem{remark}[theorem]{Remark}








\setlength{\parindent}{1em}
\setlength{\parskip}{0.1em}




\date{}

\singlespacing

\begin{document}

\title{Equivalent CM Models}
\author{Reza Rezaie and X. Rong Li
\thanks{The authors are with the Department of Electrical Engineering, University of New Orleans, New Orleans, LA 70148. Email addresses are
 {\tt\small rrezaie@uno.edu} and {\tt\small xli@uno.edu}. Research was supported by NASA through grant NNX13AD29A.}}

\maketitle

\begin{abstract}
The conditionally Markov (CM) sequence contains different classes, including Markov, reciprocal, and so-called $CM_L$ and $CM_F$ (two CM classes defined in our previous work). Markov sequences are special reciprocal sequences, and reciprocal sequences are special $CM_L$ and $CM_F$ sequences. Each class has its own forward and backward dynamic models. The evolution of a CM sequence can be described by different models. For a given problem, a model in a specific form is desired or needed, or one model can be easier to apply and better than another. Therefore, it is important to study the relationship between different models and to obtain one model from another. This paper studies this topic for models of nonsingular Gaussian (NG) $CM_L$, $CM_F$, reciprocal, and Markov sequences. Two models are \textit{probabilistically equivalent (PE)} if their stochastic sequences have the same distribution, and are \textit{algebraically equivalent (AE)} if their stochastic sequences are path-wise identical. A unified approach is presented to obtain an AE forward/backward $CM_L$/$CM_F$/reciprocal/Markov model from another such model. As a special case, a backward Markov model AE to a forward Markov model is obtained. While existing results are restricted to models with nonsingular state transition matrices, our approach is not. In addition, a simple approach is presented for studying and determining Markov models whose sequences share the same reciprocal/$CM_L$ model.

\end{abstract}

\textbf {Keywords:} Conditionally Markov, reciprocal, dynamic model, probabilistically equivalent, algebraically equivalent. 

\section{Introduction}

Consider stochastic sequences (i.e., discrete-time processes) defined over $[0,N]=(0,1,\ldots,N)$. A sequence is Markov if and only if (iff) conditioned on the state at any time $j$, the segment before $j$ is independent of the segment after $j$. A sequence is reciprocal iff conditioned on the states at any two times $j$ and $l$, the segment inside the interval $(j,l)$ is independent of the two segments outside $[j,l]$. In other words, inside and outside are independent given the boundaries. Note that this holds for every $j,l \in [0,N]$, $j<l$. Now, if this is only true for $l=N$ (and every $j \in [0,N-1]$), then the sequence is $CM_L$. A sequence is $CM_F$ ($CM_L$) iff conditioned on the state at time $0$ ($N$), the sequence is Markov over $[1,N]$ ($[0,N-1]$) \cite{CM_Part_I_Conf}. So, obviously reciprocal sequences are special $CM_L$/$CM_F$ sequences and not vice versa. Also, every Markov sequence is a reciprocal sequence.

Markov processes (and their dynamic models\footnote{A dynamic model is an equation that describes the state evolution of the process.}) have been widely studied and used for modeling random problems \cite{T_2}--\cite{Dan1}. For many problems, however, they are not general enough \cite{Fanas1}--\cite{DW_Conf}, and more general processes are needed. The reciprocal process is a generalization of the Markov process. The CM process is a more natural generalization of the Markov process. It includes the reciprocal process as a special case. So, the set of CM processes is very large and it includes many classes.

Reciprocal and CM processes have found significant applications. In a quantized state space, \cite{Fanas1}--\cite{White_Waypoint} used finite-state reciprocal sequences for detecting anomalous trajectory patterns, intent inference, and tracking. \cite{White_Gaussian} extended the results of \cite{White_Waypoint} to the Gaussian case. The work of \cite{Simon}--\cite{Simon2} for intent inference, e.g., in an intelligent interactive vehicle's display, can be interpreted in the reciprocal process setting. \cite{Krener1} studied the relationship between acausal systems and reciprocal processes. Application of reciprocal processes in image processing can be found in \cite{Picci}--\cite{Picci2}. In \cite{DD_Conf}--\cite{DW_Conf}, CM sequences were used for motion trajectory modeling with waypoint and destination information.

In theory, Gaussian CM processes were introduced in \cite{Mehr} based on mean and covariance functions. \cite{ABRAHAM} extended the definition of Gaussian CM processes (presented in \cite{Mehr}) to the general (Gaussian/non-Gaussian) case. In \cite{CM_Part_I_Conf}, we defined other (Gaussian/non-Gaussian) CM processes, studied (stationary/non-stationary) NG CM sequences, obtained dynamic models and characterizations of $CM_L$ and $CM_F$ sequences, and discussed their applications. Reciprocal processes were introduced in \cite{Bernstein}. \cite{Jamison_Reciprocal}--\cite{Roally} studied reciprocal processes in a general setting. Based on a valuable observation, \cite{ABRAHAM} commented on the relationship between Gaussian CM and Gaussian reciprocal processes. \cite{CM_Part_II_A_Conf} elaborated on the comment of \cite{ABRAHAM} and presented a relationship between the CM process and the reciprocal process for the general (Gaussian/non-Gaussian) case. \cite{Levy_Dynamic}--\cite{Levy_Class} presented and studied a dynamic model of NG reciprocal sequences. \cite{CM_Part_II_A_Conf} and \cite{CM_Part_II_B_Conf}--\cite{CM_Part_III_Journal} studied reciprocal sequences from the CM viewpoint and developed dynamic models, called reciprocal $CM_L$ and reciprocal $CM_F$ models, with white dynamic noise for the NG reciprocal sequence. A characterization of the NG Markov sequence was presented in \cite{Ackner}. \cite{White} considered modeling and estimation of finite-state reciprocal sequences.

The evolution of a Markov sequence can be modeled by a Markov, reciprocal, or $CM_L$ model\footnote{By a ``model", we may mean a model with or without its boundary condition, as is clear from the context.}. Similarly, a reciprocal sequence can have a model in the form of the one in \cite{Levy_Dynamic} or in the form of a $CM_L$ ($CM_F$) model of \cite{CM_Part_II_A_Conf}. Therefore, a CM sequence can have more than one model. One model can be easier to apply than another for an application. For example, the reciprocal $CM_L$ model of \cite{CM_Part_II_A_Conf} is easier to apply than the reciprocal model of \cite{Levy_Dynamic} for trajectory modeling with destination information \cite{DD_Conf}. The dynamic noise is white for the former but colored for the latter. But the reciprocal model of \cite{Levy_Dynamic} can be useful for some other purposes since it generalizes a Markov model in a nearest-neighbor structure. In addition, a Markov model is simpler than a reciprocal or $CM_L$ model. So, for a Markov sequence, a Markov model is more desired than a reciprocal or $CM_L$ model. Moreover, sometimes only a forward model (FM) is available when a backward model (BM) is required. So, it is important to determine one model from another.

In some cases, \textit{probabilistic equivalence} is not sufficient because it is only about distributions, not a sample path. The two-filter smoothing approach is an example, which needs the relationship in dynamic noise and boundary values\footnote{For a forward (backward) Markov model, a boundary value means an initial (final) value.} between an FM and a BM for them to share an identical sample path of the sequence \cite{Wax_Kailath}--\cite{Alan_Willsky}. In other words, \textit{algebraically equivalent (AE)} Markov FM and BM are required. To our knowledge, there is no general and unified approach to determining AE Markov, reciprocal, or CM models in the literature. 

Motivated by the two-filter smoothing approach, determination of a Markov BM from a Markov FM has been the topic of several papers \cite{MB_1}--\cite{Verghese1}. The Markov FM and BM derived in \cite{MB_1}--\cite{MB_4} are PE, but not AE. The Markov BM presented in \cite{Verghese} is AE only to FMs with nonsingular state transition matrices. For models with a singular state transition matrix, \cite{Verghese} only provided a PE BM. Later papers followed the approach of \cite{Verghese} and, to our knowledge, there is no Markov BM that is AE to an FM with a singular state transition matrix in the literature. As a result, we can not check the required conditions of a two-filter smoother for such a Markov model.

Given a Markov model, \cite{Levy_Dynamic} determined an AE reciprocal model. However, \cite{Levy_Dynamic} did not present a unified approach for determining other AE CM models. 

An important question in the theory of reciprocal processes regards Markov processes sharing the same reciprocal evolution law \cite{Levy_Class}, \cite{Jamison_Reciprocal}. Given a reciprocal model of \cite{Levy_Dynamic}, \cite{Levy_Class} discussed determination of Markov sequences sharing the same reciprocal model. Also, given a reciprocal transition density, \cite{Jamison_Reciprocal} determined the required conditions on the joint endpoint distribution for the process to be Markov. It is desired to have a simple approach for studying and determining Markov models whose sequences share the same reciprocal/$CM_L$/$CM_F$ model. This is not only useful for understanding the relationships between the models/sequences, but also helpful for application of the models. \cite{CM_Part_II_B_Conf}--\cite{CM_Part_III_Journal} discussed $CM_L$ models \textit{induced} by Markov models for trajectory modeling with destination information, and showed that using a Markov model to induce a $CM_L$ model is useful for parameter design of the latter. Also, it was shown that a reciprocal $CM_L$ model can be induced by any Markov model whose sequence obeys the given reciprocal $CM_L$ model (and some boundary condition). So, it is desired to determine all such Markov models and to study their relationship. But a simple approach for this purpose is lacking in the literature.

This paper makes the following main contributions. Relationships between $CM_L$, $CM_F$, reciprocal, and Markov  models for NG sequences are studied. The notion of AE models is defined versus PE ones. Then, a general and unified approach for linear models is presented, based on which from one such model, any AE model can be obtained. The presented approach is simple and not restricted to the above models. As a special case, a Markov BM that is AE to a Markov FM is obtained. Unlike \cite{Verghese}, this approach works for both singular and nonsingular state transition matrices. So, the required conditions in the derivation of two-filter smoothing can be verified for all Markov models. The reciprocal model of \cite{Levy_Dynamic} AE to a Markov model is obtained as a special case of our result. A simple approach is presented for studying and determining Markov models whose sequences share the same reciprocal/$CM_L$ model.

A preliminary and short conference version of this paper is \cite{CM_Explicitly}, where only Propositions \ref{Equivalent_Construction_Lemma} and \ref{Equivalent_Construction_Lemma_2} (with proof), Proposition \ref{Equivalent_Relation_Proposition} (without proof), and examples of Section \ref{Example} were presented. Significant results beyond those of \cite{CM_Explicitly} include the following. A proof of Proposition \ref{Equivalent_Relation_Proposition} is presented. In Section \ref{Discussion}, two approaches are elaborated for obtaining models AE to a reciprocal model; parameters of PE models are discussed. In Section \ref{Classes}, a simple approach is presented for studying Markov models whose sequences share the same reciprocal/$CM_L$ model. Appendices give details for determining AE models.

The paper is organized as follows. Section \ref{Definitions} reviews definitions and models of $CM_L$, $CM_F$, reciprocal, and Markov sequences. Also, definitions of PE and AE models are presented. Section \ref{General_Approach} presents a unified approach for determining AE models. Section \ref{Example} discusses two AE models obtained based on the approach of Section \ref{General_Approach}. Section \ref{Discussion} discusses some points regarding AE models. Section \ref{Classes} presents a simple approach for studying Markov models whose sequences share the same reciprocal/$CM_L$ model. Section \ref{Summary} contains a summary and conclusions. Some details of the approach of Section \ref{General_Approach} are presented in appendices.

\section{Conventions, Definitions, and Preliminaries}\label{Definitions}

\subsection{Conventions}\label{Definitions_Con}

The following conventions are used:
\begin{align*}
[i,j]& \triangleq (i,i+1,\ldots ,j-1,j), \quad i<j \\
[x_k]_{i}^{j} & \triangleq (x_i, x_{i+1}, \ldots, x_j), \quad [x_k]  \triangleq [x_k]_{0}^{N}\\
x & \triangleq [x_0',x_1', \ldots, x_N']', \quad C=\text{Cov}(x)
\end{align*}
where $k$ in $[x_k]_i^j$ is a dummy variable. The symbols ``$\setminus$" and `` $'$ " are used for set subtraction and matrix transposition, respectively. $F(\cdot | \cdot)$ denotes a conditional cumulative distribution function (CDF). ``nonsingular Gaussian" is abbreviated as NG. The term ``boundary value" is used for vectors in equations as a ``boundary condition". Throughout the paper, we consider the models reviewed in Subsection \ref{Definitions_Pre}. Also, this paper considers only zero-mean NG sequences. A Gaussian sequence $[x_k]$ is nonsingular if its covariance matrix $C$ is nonsingular.

\subsection{Definitions}\label{Definitions_Def}

\begin{definition}\label{CMc_CDF}
$[x_k]$ is $CM_c, c \in \lbrace 0,N \rbrace$, if \footnote{$F(\xi_k|x_j)=P\lbrace x^1_k\leq \xi^1_k, x^2_k\leq \xi^2_k, \ldots , x^d_k\leq \xi^d_k|x_j \rbrace$, where for example $x^1_k$ and $\xi^1_k$ are the first entries of the vectors $x_k$ and $\xi_k$, respectively. Similarly for other CDFs.} 
\begin{align}
F(\xi _k|[x_{i}]_{0}^{j},x_{c})=F(\xi _k|x_j,x_c)
\end{align}
$\forall j,k \in [0,N], j<k$, $\forall \xi _k \in \mathbb{R}^d$ \footnote{$d$ is the dimension of $x_k$}.  

\end{definition}

For the forward (backward) direction, a $CM_0$ sequence is called $CM_F$ ($CM_L$). The subscript ``$F$" (``$L$") is used because the conditioning is on the state at the \textit{first} (\textit{last}) time in the forward (backward) direction. Similarly, a $CM_N$ sequence is called $CM_L$ (backward $CM_F$). The evolution of a sequence can be modeled by an FM or a BM. The forward direction is the default. The backward direction will be made explicit.

\begin{definition}\label{CDF}
$[x_k]$ is reciprocal if $F(\xi _k|[x_{i}]_{0}^{j},[x_i]_l^N)=F(\xi _k|x_j,x_l)$, $\forall j,k,l \in [0,N]$, $j < k < l$, $\forall \xi _k \in \mathbb{R}^d$.  

\end{definition}

\begin{definition}
$[x_k]$ is Markov if $F(\xi _k|[x_{i}]_{0}^{j})=F(\xi _k|x_j)$, $\forall j, k \in [0,N]$, $j<k$, $\forall \xi _k \in \mathbb{R}^d$.  

\end{definition}

\subsection{Preliminaries: Dynamic Models and Characterizations}\label{Definitions_Pre}

Let $[x_k]$ be a zero-mean NG sequence.

\subsubsection{Markov Model}
$[x_k]$ is Markov iff
\begin{align}
x_k&=M_{k,k-1}x_{k-1}+e^M_{k}, k \in [1,N] \label{Markov_Dynamic_Forward}
\end{align}
where $x_0=e^M_0$ and $[e^M_k]$ ($M_k=\text{Cov}(e^{M}_k)$) is a zero-mean white NG sequence. We have
\begin{align}
\mathcal{M}x&=e^M, \quad e^M = [(e^M_0)' , (e^M_1)', \ldots , (e^M_N)']' \label{Mxe-Markov}
\end{align} 
where $\mathcal{M}$ is the nonsingular matrix
\begin{align*}
\left[ \begin{array}{cccccc}
I & 0 & 0 &  \cdots & 0 & 0\\
-M_{1,0} & I & 0 &  \cdots & 0 & 0\\
0 & -M_{2,1} & I & 0 & \cdots & 0\\
\vdots & \vdots & \vdots & \vdots & \vdots & \vdots \\
0 & 0 & \cdots & -M_{N-1,N-2} & I & 0\\
0 & 0 & 0 &  \cdots & -M_{N,N-1} & I
\end{array}\right]
\end{align*}
From $\eqref{Mxe-Markov}$, the inverse of the covariance matrix of $[x_k]$ is
\begin{align}\label{C_Inverse_Markov_Forward}
C^{-1}=\mathcal{M}'M^{-1}\mathcal{M}
\end{align} 
where $M=\text{Cov}(e^M)=\text{diag}(M_0,M_1,\ldots ,M_N)$. $C^{-1}$ is (block) tri-diagonal as
\begin{align}\left[
\begin{array}{ccccccc}
A_0 & B_0 & 0 & \cdots & 0 & 0 & 0\\
B_0' & A_1 & B_1 & 0 & \cdots & 0 & 0\\
0 & B_1' & A_2 & B_2 & \cdots & 0 & 0\\
\vdots & \vdots & \vdots & \vdots & \vdots & \vdots & \vdots\\
0 & \cdots & 0 & B_{N-3}' & A_{N-2}  & B_{N-2} & 0\\
0 & \cdots & 0 & 0 & B_{N-2}' & A_{N-1} & B_{N-1}\\
0 & 0 & 0 & \cdots & 0 & B_{N-1}' & A_N
\end{array}\right]\label{Tridiagonal}
\end{align}

\subsubsection{Backward Markov Model}
$[x_k]$ is Markov iff
\begin{align}
x_{k}&=M^B_{k,k+1}x_{k+1}+e^{BM}_{k}, k \in [0,N-1]  \label{Markov_Dynamic_Backward}
\end{align}
where $x_N=e^{BM}_N$ and $[e^{BM}_k]$ ($M^B_k=\text{Cov}(e^{BM}_k)$) is a zero-mean white NG sequence. 

We have
\begin{align}
\mathcal{M}^Bx&=e^{BM}, \quad e^{BM} =[(e^{BM}_0)' , \ldots , (e^{BM}_{N})']'\\
C^{-1}&=(\mathcal{M}^B)'(M^B)^{-1}\mathcal{M}^B \label{C_Inverse_Markov_Backward}
\end{align} 
where $M^B=\text{Cov}(e^{BM})=\text{diag}(M^B_0,\ldots ,M^B_{N})$, $C^{-1}$ is (block) tri-diagonal, and $\mathcal{M}^B$ is the nonsingular matrix 
\begin{align}\label{MB}
\left[ \begin{array}{cccccc}
I & -M^{B}_{0,1} & 0 &  \cdots & 0 & 0\\
0 & I & -M^B_{1,2} & 0 &  \cdots & 0\\
0 & 0 & I & -M^B_{2,3} & \cdots & 0\\
\vdots & \vdots & \vdots & \vdots & \vdots & \vdots \\
0 & 0 & \cdots & 0 & I & -M^B_{N-1,N}\\
0 & 0 & 0 &  \cdots & 0 & I
\end{array}\right]
\end{align}

\subsubsection{Reciprocal Model \cite{Levy_Dynamic}}
$[x_k]$ is reciprocal iff
\begin{align}
R^0_kx_k-R^-_{k}x_{k-1}-R^+_{k}x_{k+1}=e^R_k, \quad k \in [1,N-1] \label{Reciprocal_Dynamic}
\end{align}
where $[e^R_k]_1^{N-1}$ is a zero-mean colored Gaussian sequence with $E[e^R_k(e^R_{k})']=R^0_k$, $k \in [1,N-1]$, $E[e^R_k(e^R_{k+1})']=-R^+_k$, $k \in [1,N-2]$, $E[e^R_k(e^R_j)']=0, |k-j|>1$, $R^+_k=(R^-_{k+1})'$, $k \in [1,N-2]$ and boundary condition (i) or (ii) below, with parameters of $\eqref{Reciprocal_Dynamic}$ and either boundary condition leading to a nonsingular sequence.

(i) The first type:
\begin{align}
R^0_0x_0-R^-_0x_N-R^+_0x_1&=e^R_0\label{Reciprocal_Cyclic_BC1}\\
R^0_Nx_N-R^-_Nx_{N-1}-R^+_Nx_0&=e^R_N\label{Reciprocal_Cyclic_BC2}
\end{align}
where $E[e^R_0(e^R_1)']=-R^+_0$, $E[e^R_N(e^R_0)']=-R^+_N$, $E[e^R_0(e^R_0)']$ $=R^0_0$, $E[e^R_0(e^R_k)']=0$, $k \in [2,N-1]$, $E[e^R_N(e^R_k)']=0$, $k \in [1,N-2]$, $E[e^R_N(e^R_N)']=R^0_N$, $E[e^R_{N-1}(e^R_N)']=-R^+_{N-1}$, $(R^-_0)'=R^+_N$, $(R^-_N)'=R^+_{N-1}$, $(R^-_1)'=R^+_0$.

(ii) The second type:  $[x_0' , x_N']' \sim \mathcal{N}(0,\text{Cov}([x_0',x_N']'))$, which can be written as
\begin{align}
x_0&=e^R_0, \quad x_N=R_{N,0}x_0+e^R_N\label{Reciprocal_Dirichlet_BC1}
\end{align}
or equivalently
\begin{align}
x_N&=e^R_N, \quad x_0=R_{0,N}x_N+e^R_0\label{Reciprocal_Dirichlet_BC2}
\end{align}
where $e^R_0$ and $e^R_N$ are uncorrelated zero-mean NG vectors\footnote{$e^R_0$ and $e^R_N$ (and their covariances) in $\eqref{Reciprocal_Dirichlet_BC1}$ are not necessarily the same as those in $\eqref{Reciprocal_Dirichlet_BC2}$ or in the first boundary condition. Likewise for $e_0$ and $e_N$ in $\eqref{CML_Forward_BC1}$ and $\eqref{CML_Forward_BC2}$. We use the same notation for simplicity.} with covariances $R^0_0$ and $R^0_N$, and uncorrelated with $[e^R_k]_1^{N-1}$.

Consider $\eqref{Reciprocal_Dynamic}$ and boundary condition\footnote{Boundary condition (ii) is discussed only in Section \ref{Discussion}. In all other sections, we consider boundary condition (i).} $\eqref{Reciprocal_Cyclic_BC1}$--$\eqref{Reciprocal_Cyclic_BC2}$ with appropriate parameters leading to a nonsingular sequence. Then,
\begin{align}
\mathfrak{R}x&=e^R, \quad e^R = [(e^R_0)' , \ldots , (e^R_N)']'\\
C^{-1}&=\mathfrak{R}'R^{-1}\mathfrak{R}=R \label{C_Inverse_Reciprocal_Rec}
\end{align} 
where $R=\text{Cov}(e^R)=\mathfrak{R}$ and $\mathfrak{R}$ is
\begin{align}\label{R}
\left[ \begin{array}{cccccc}
R^0_0 & -R^+_0 & 0 &  \cdots & 0 & -R^-_0\\
-R^-_{1} & R^0_1 & -R^+_1 & 0 & \cdots & 0\\
0 & -R^-_{2} & R^0_2 & -R^+_2 & \cdots & 0\\
\vdots & \vdots & \vdots & \vdots & \vdots & \vdots \\
0 & 0 & \cdots & -R^-_{N-1} & R^0_{N-1} & -R^+_{N-1}\\
-R^+_N & 0 & 0 &  \cdots & -R^-_{N} & R^0_N
\end{array}\right]
\end{align}
Since the sequence is nonsingular, so is $\eqref{R}$ \cite{Levy_Dynamic}. Then, $C^{-1}=R$ is (block) cyclic tri-diagonal, denoted by
\begin{align}\left[
\begin{array}{ccccccc}
A_0 & B_0 & 0 & \cdots & 0 & 0 & D_0\\
B_0' & A_1 & B_1 & 0 & \cdots & 0 & 0\\
0 & B_1' & A_2 & B_2 & \cdots & 0 & 0\\
\vdots & \vdots & \vdots & \vdots & \vdots & \vdots & \vdots\\
0 & \cdots & 0 & B_{N-3}' & A_{N-2}  & B_{N-2} & 0\\
0 & \cdots & 0 & 0 & B_{N-2}' & A_{N-1} & B_{N-1}\\
D_0' & 0 & 0 & \cdots & 0 & B_{N-1}' & A_N
\end{array}\right]\label{Cyclic_Tridiagonal}
\end{align}

A reciprocal model is symmetric. So, its forward and backward versions are the same.

\begin{remark}
In this paper, model $\eqref{Reciprocal_Dynamic}$ (with either boundary condition) is called a reciprocal model, to be distinguished from our reciprocal $CM_L$ and $CM_F$ models, presented below.

\end{remark}

\subsubsection{$CM_c$ Models \cite{CM_Part_I_Conf}, \cite{CM_Part_II_A_Conf}}
$[x_k]$ is $CM_c$, $c \in \lbrace 0,N \rbrace$, iff 
\begin{align}
x_k=G_{k,k-1}x_{k-1}+G_{k,c}x_c+e_k, \quad k \in [1,N] \setminus \lbrace c \rbrace
\label{CML_Dynamic_Forward}
\end{align}
where $[e_k]$ ($G_k=\text{Cov}(e_k)$) is a zero-mean white NG sequence, and boundary condition\footnote{Note that $\eqref{CML_Forward_BC1}$  means that for $c=N$ we have $x_0=e_0$ and $x_N=G_{N,0}x_0+e_N$, and for $c=0$ we have $x_0=e_0$. Likewise for $\eqref{CML_Forward_BC2}$.}
\begin{align}
&x_0=e_0, \quad x_c=G_{c,0}x_0+e_c \, \, (\text{for} \,\, c=N)\label{CML_Forward_BC1}
\end{align}
or equivalently
\begin{align}
&x_c=e_c, \quad x_0=G_{0,c}x_c+e_0 \, \, (\text{for} \, \, c=N) \label{CML_Forward_BC2}
\end{align}

For $c=0$, we have a $CM_F$ model. Then,
\begin{align}
\mathcal{G}^Fx&=e^F, \quad e^F \triangleq [e_0' , \ldots  , e_N']'\label{Lx=e}\\
C^{-1}&=(\mathcal{G}^F)'(G^F)^{-1}\mathcal{G}^F \label{C_Inverse_Forward_CMF}
\end{align}
where $G^F=\text{Cov}(e^F)=\text{diag}(G_0,\ldots ,G_N)$ and $\mathcal{G}^F$ is the nonsingular matrix 
\begin{align*}
\left[ \begin{array}{cccccc}
I & 0 & 0 &  \cdots & 0 & 0\\
-2G_{1,0} & I & 0 &  \cdots & 0 & 0\\
-G_{2,0} & -G_{2,1} & I & 0 & \cdots & 0\\
\vdots & \vdots & \vdots & \vdots & \vdots & \vdots \\
-G_{N-1,0} & 0 & \cdots & -G_{N-1,N-2} & I & 0\\
-G_{N,0} & 0 & 0 &  \cdots & -G_{N,N-1} & I
\end{array}\right]
\end{align*} 
$C^{-1}$ is a $CM_F$ matrix defined as follows.
\begin{definition}
A symmetric positive definite matrix is $CM_F$ if it has the following form 
\begin{align}\left[
\begin{array}{ccccccc}
A_0 & B_0 & D_2 & \cdots & D_{N-2} & D_{N-1} & D_{N}\\
B_0' & A_1 & B_1 & 0 & \cdots & 0 & 0\\
D_2' & B_1' & A_2 & B_2 & \cdots & 0 & 0\\
\vdots & \vdots & \vdots & \vdots & \vdots & \vdots & \vdots\\
D_{N-2}' & \cdots & 0 & B_{N-3}' & A_{N-2}  & B_{N-2} & 0\\
D_{N-1}' & \cdots & 0 & 0 & B_{N-2}' & A_{N-1} & B_{N-1}\\
D_{N}' & 0 & 0 & \cdots & 0 & B_{N-1}' & A_N
\end{array}\right]\label{CMF_Matrix}
\end{align}
\end{definition}
Here $A_k$, $B_k$, and $D_k$ are matrices in general.  

For $c=N$, we have a $CM_L$ model. Then,
\begin{align}
\mathcal{G}^Lx&=e^L, \quad e^L \triangleq [e_0', \ldots  ,  e_N']' \label{Lx=e}\\
C^{-1}&=(\mathcal{G}^L)'(G^L)^{-1}\mathcal{G}^L \label{C_Inverse_Forward_CML}
\end{align}
where $G^L=\text{Cov}(e^L)=\text{diag}(G_0,\ldots ,G_N)$, $\mathcal{G}^L$ is the nonsingular matrix
\begin{align*}
\left[ \begin{array}{cccccc}
I & 0 & 0 &  \cdots & 0 & 0\\
-G_{1,0} & I & 0 &  \cdots & 0 & -G_{1,N}\\
0 & -G_{2,1} & I & 0 & \cdots & -G_{2,N}\\
\vdots & \vdots & \vdots & \vdots & \vdots & \vdots \\
0 & 0 & \cdots & -G_{N-1,N-2} & I & -G_{N-1,N}\\
-G_{N,0} & 0 & 0 &  \cdots & 0 & I
\end{array}\right]
\end{align*}
for $\eqref{CML_Forward_BC1}$, and $\mathcal{G}^L$ for $\eqref{CML_Forward_BC2}$ is the nonsingular matrix
\begin{align*}
\left[ \begin{array}{cccccc}
I & 0 & 0 &  \cdots & 0 & -G_{0,N}\\
-G_{1,0} & I & 0 &  \cdots & 0 & -G_{1,N}\\
0 & -G_{2,0} & I & 0 & \cdots & -G_{2,N}\\
\vdots & \vdots & \vdots & \vdots & \vdots & \vdots \\
0 & 0 & \cdots & -G_{N-1,N-2} & I & -G_{N-1,N}\\
0 & 0 & 0 &  \cdots & 0 & I
\end{array}\right]
\end{align*}
$C^{-1}$ is a $CM_L$ matrix defined as follows.
\begin{definition}
A symmetric positive definite matrix is $CM_L$ if it has the following form 
\begin{align}
\left[
\begin{array}{ccccccc}
A_0 & B_0 & 0 & \cdots & 0 & 0 & D_0\\
B_0' & A_1 & B_1 & 0 & \cdots & 0 & D_1\\
0 & B_1' & A_2 & B_2 & \cdots & 0 & D_2\\
\vdots & \vdots & \vdots & \vdots & \vdots & \vdots & \vdots\\
0 & \cdots & 0 & B_{N-3}' & A_{N-2}  & B_{N-2} & D_{N-2}\\
0 & \cdots & 0 & 0 & B_{N-2}' & A_{N-1} & B_{N-1}\\
D_0' & D_1' & D_2' & \cdots & D_{N-2}' & B_{N-1}' & A_N
\end{array}\right]\label{Block_CML}
\end{align}

\end{definition}

\begin{remark}\label{CMc_R_M}
$[x_k]$ is reciprocal iff it obeys $\eqref{CML_Dynamic_Forward}$ along with $\eqref{CML_Forward_BC1}$ or $\eqref{CML_Forward_BC2}$ and 
\begin{align}
G_k^{-1}G_{k,c}=G_{k+1,k}'G_{k+1}^{-1}G_{k+1,c}
\label{CML_Condition_Reciprocal}
\end{align}
$\forall k \in [1,N-2]$ for $c=N$, or $\forall k \in [2,N-1]$ for $c=0$. Moreover, for $c=N$, $[x_k]$ is Markov iff in addition to $\eqref{CML_Condition_Reciprocal}$, we have $G_0^{-1}G_{0,N}=G_{1,0}'G_1^{-1}G_{1,N}$ for $\eqref{CML_Forward_BC2}$, or equivalently $G_N^{-1}G_{N,0}=G_{1,N}'G_{1}^{-1}G_{1,0}$ for $\eqref{CML_Forward_BC1}$. Also, for $c=0$, $[x_k]$ is Markov iff $G_{N,0}=0$ in addition to $\eqref{CML_Condition_Reciprocal}$.

\end{remark}

By Remark \ref{CMc_R_M}, a reciprocal sequence may obey a $CM_c$ model. Such a $CM_c$ model is called a reciprocal $CM_c$ model.

\subsubsection{Backward $CM_c$ Models}
$[x_k]$ is $CM_c$, $c \in \lbrace 0,N \rbrace$, iff
\begin{align}
x_k=G^B_{k,k+1}x_{k+1}+G^B_{k,c}x_c+e^{B}_k, k \in [0,N-1] \setminus \lbrace c\rbrace
\label{CML_Dynamic_Backward}
\end{align}
where $[e^{B}_k]$ ($G^B_k=\text{Cov}(e^B_k)$) is a zero-mean white NG sequence, and boundary condition
\begin{align}
x_N=e^{B}_N, \quad x_{c}=G^{B}_{c,N}x_N+e^{B}_{c} \, \, (\text{for} \, \, c=0) \label{CML_Backward_BC1}
\end{align}
or equivalently
\begin{align}
x_c=e^{B}_c, \quad x_{N}=G^{B}_{N,c}x_c+e^{B}_{N} \, \, (\text{for} \, \, c=0) \label{CML_Backward_BC2}
\end{align}

For $c=0$, we have a backward $CM_L$ model. Then, 
\begin{align}
\mathcal{G}^{BL}x&=e^{BL}, \quad e^{BL} =[(e^{B}_0)'  , \ldots  , (e^{B}_N)']' \label{FBx=e}\\
C^{-1}&=(\mathcal{G}^{BL})'(G^{BL})^{-1}\mathcal{G}^{BL} \label{C_Inverse_Reciprocal}
\end{align}
where $C^{-1}$ is a $CM_F$ matrix, $G^{BL}=\text{Cov}(e^{BL})=\text{diag}(G^B_0,\ldots ,G^B_N)$, $\mathcal{G}^{BL}$ is the nonsingular matrix
\begin{align*}
\left[ \begin{array}{cccccc}
I & 0 & 0 &  \cdots & 0 & -G^{B}_{0,N}\\
-G^B_{1,0} & I & -G^B_{1,2} &  \cdots & 0 & 0\\
-G^B_{2,0} & 0 & I & -G^B_{2,3} & \cdots & 0\\
\vdots & \vdots & \vdots & \vdots & \vdots & \vdots \\
-G^B_{N-1,0} & 0 & \cdots & 0 & I & -G^{B}_{N-1,N}\\
0 & 0 & 0 &  \cdots & 0 & I
\end{array}\right]
\end{align*}
for $\eqref{CML_Backward_BC1}$, and $\mathcal{G}^{BL}$ for $\eqref{CML_Backward_BC2}$ is the nonsingular matrix
\begin{align*}
\left[ \begin{array}{cccccc}
I & 0 & 0 &  \cdots & 0 & 0\\
-G^B_{1,0} & I & -G^B_{1,2} &  \cdots & 0 & 0\\
-G^B_{2,0} & 0 & I & -G^B_{2,3} & \cdots & 0\\
\vdots & \vdots & \vdots & \vdots & \vdots & \vdots \\
-G^B_{N-1,0} & 0 & \cdots & 0 & I & -G^{B}_{N-1,N}\\
-G^{B}_{N,0} & 0 & 0 &  \cdots & 0 & I
\end{array}\right]
\end{align*}

For $c=N$, we have a backward $CM_F$ model. Then,
\begin{align}
\mathcal{G}^{BF}x&=e^{BF}, \quad e^{BF} =[(e^{B}_0)'  , \ldots  , (e^{B}_N)']' \label{FBx=e}\\
C^{-1}&=(\mathcal{G}^{BF})'(G^{BF})^{-1}\mathcal{G}^{BF} \label{C_Inverse_Reciprocal}
\end{align}
where $C^{-1}$ is a $CM_L$ matrix, $G^{BF}=\text{Cov}(e^{BF})=\text{diag}(G_0,\ldots ,G_N)$, and $\mathcal{G}^{BF}$ is the nonsingular matrix 
\begin{align}\label{F_B}
\left[ \begin{array}{cccccc}
I & -G^B_{0,1} & 0 &  \cdots & 0 & -G^B_{0,N}\\
0 & I & -G^B_{1,2} &  \cdots & 0 & -G^B_{1,N}\\
0 & 0 & I & -G^B_{2,3} & \cdots & -G^B_{2,N}\\
\vdots & \vdots & \vdots & \vdots & \vdots & \vdots \\
0 & 0 & \cdots & 0 & I & -2G^{B}_{N-1,N}\\
0 & 0 & 0 &  \cdots & 0 & I
\end{array}\right]
\end{align} 

\begin{remark}
$[x_k]$ is reciprocal iff it obeys $\eqref{CML_Dynamic_Backward}$ along with $\eqref{CML_Backward_BC1}$ or $\eqref{CML_Backward_BC2}$ and $(G^B_{k+1})^{-1}G^B_{k+1,c}=(G^B_{k,k+1})'(G^B_{k})^{-1}G^B_{k,c}$, $\forall k \in [1,N-2]$ for $c=0$, or $\forall k \in [0,N-3]$ for $c=N$. Moreover, for $c=0$, $[x_k]$ is Markov iff we also have $(G^B_0)^{-1}G^{B}_{0,N}=(G^B_{N-1,0})'(G^B_{N-1})^{-1}G^B_{N-1,N}$ for $\eqref{CML_Backward_BC1}$, or $(G^B_N)^{-1}G^{B}_{N,0}=(G^B_{N-1,N})'(G^B_{N-1})^{-1}G^B_{N-1,0}$ for $\eqref{CML_Backward_BC2}$; for $c=N$, $[x_k]$ is Markov iff we also have $G^{B}_{0,N}=0$.

\end{remark}

Forward and backward $CM_L$ ($CM_F$) models have similar structures. They differ only in the time direction. 

For a Markov model, $[e^M_k]_1^N$ is the dynamic noise and $e^M_0$ is the initial value. Likewise for other models.

Let $[x_k]$ be a CM sequence obeying any of the above models. Then,
\begin{align}
T x&=v, \quad v =[v_{0}' , \ldots , v_{N}']' \label{Model}
\end{align}
where the vector $v$ consists of the dynamic noise and the boundary values. The matrix $T$ is determined by parameters of the corresponding model. $T$ is nonsingular for all models (i.e., forward/backward Markov, reciprocal, $CM_L$, and $CM_F$) considered. (Note that since $[x_k]$ is assumed nonsingular, $T$ is nonsingular for the reciprocal model.)

\begin{definition}\label{Equivalent}
Two models $T_1 x = v$ and $T_2 y = w$ are PE if $x$ and $y$ have the same distribution.

\end{definition}

\begin{definition}\label{Explicit_Equivalent}
Two models $T_1 x = v$ and $T_2 y = w$ are AE if $x=y$.

\end{definition}

\begin{definition}\label{ONLY_Equivalent}
Two models are equivalent if they are either PE or AE.

\end{definition}

\section{Determination of Algebraically Equivalent Models: A Unified Approach}\label{General_Approach}

By Definitions \ref{Equivalent}, \ref{Explicit_Equivalent}, and \ref{ONLY_Equivalent} equivalence is mutual: if model 2 is equivalent to model 1, so is model 1 to model 2. 

To determine a PE model, we need to fix its parameters. Thus, we have the following proposition.

\begin{proposition}\label{Equivalent_Construction_Lemma}
Any two models considered
\begin{align}
T_1x&=v \label{Model_1}\\
T_2y&=w \label{Model_2}
\end{align}
are PE iff
\begin{align}
T_2'P_2^{-1}T_2=T_1'P_1^{-1}T_1\label{Step_1}
\end{align}
where $v =[v_{0}' ,  \ldots , v_{N}']'$ and $w=[w_{0}' , \ldots , w_{N}']'$ are the vectors of the dynamic noise and boundary values with covariances $\text{Cov}(v)=P_1$ and $\text{Cov}(w)=P_2$. 

\end{proposition}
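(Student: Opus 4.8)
The plan is to exploit the fact that both $x$ and $y$ are zero-mean nonsingular Gaussian sequences, so each distribution is completely determined by its covariance matrix. PE then reduces to equality of covariances, and the asserted identity \eqref{Step_1} is merely a rewriting of this equality in terms of inverse covariances. So the whole argument is: translate the model equation into a formula for $\text{Cov}(x)$, and then observe that for zero-mean Gaussians ``same distribution'' means ``same covariance''.

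First I would record the elementary consequence of the model equation. Since every $T$ considered is nonsingular (as noted before Definition \ref{Equivalent}) and every noise/boundary covariance $P_i$ is nonsingular (the blocks $M_k, G_k,\ldots$ are nonsingular, and $\mathfrak{R}$ is nonsingular by assumption of a nonsingular sequence), from $T_1 x = v$ we get $x = T_1^{-1} v$, whence
\[
\text{Cov}(x) = T_1^{-1} P_1 (T_1^{-1})' = T_1^{-1} P_1 (T_1')^{-1},
\qquad \text{so}\qquad \text{Cov}(x)^{-1} = T_1' P_1^{-1} T_1 .
\]
The analogous computation for $T_2 y = w$ gives $\text{Cov}(y)^{-1} = T_2' P_2^{-1} T_2$. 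These are exactly the special instances already displayed in the preliminaries, e.g.\ \eqref{C_Inverse_Markov_Forward} for the Markov case and \eqref{C_Inverse_Reciprocal_Rec} for the reciprocal case.

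Next I would close the two implications. For the ``if'' direction, assume \eqref{Step_1}; then $\text{Cov}(x)^{-1} = \text{Cov}(y)^{-1}$, hence $\text{Cov}(x) = \text{Cov}(y)$, and since $x$ and $y$ are zero-mean Gaussian with equal covariance they have the same distribution, i.e.\ the models are PE. For the ``only if'' direction, assume the models are PE; by Definition \ref{Equivalent}, $x$ and $y$ share the same distribution, and a zero-mean Gaussian law is determined by its covariance, so $\text{Cov}(x) = \text{Cov}(y)$; inverting yields \eqref{Step_1}.

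I do not anticipate a serious obstacle. The substantive content is just that a zero-mean Gaussian distribution is pinned down by its covariance, combined with the invertibility of $T_i$ and $P_i$. The only point meriting an explicit line is the nonsingularity of each $P_i$, needed so that $P_i^{-1}$ exists and the passage to inverse covariances is legitimate; this holds because the dynamic-noise covariances are nonsingular and, in the reciprocal case, because $\mathfrak{R} = \text{Cov}(e^R)$ inherits nonsingularity from that of the sequence.
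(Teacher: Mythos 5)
Your proposal is correct and follows essentially the same route as the paper's proof: both derive $\text{Cov}(x)^{-1}=T_1'P_1^{-1}T_1$ and $\text{Cov}(y)^{-1}=T_2'P_2^{-1}T_2$ from the model equations and then reduce PE to equality of covariances, which for zero-mean Gaussian sequences is equivalent to equality of distributions. Your version merely makes explicit two points the paper leaves implicit (the nonsingularity of the $P_i$ and the covariance-determines-distribution fact), which is a reasonable elaboration rather than a different argument.
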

\begin{proof}
For the sequence obeying model $\eqref{Model_1}$ we have $C^{-1}=T_1'(P_1)^{-1}T_1$ because $E[(T_1x)(T_1x)']=E[v v ']$. Similarly, for the sequence obeying $\eqref{Model_2}$, we have $C^{-1}=T_2'(P_2)^{-1}T_2$. Two models are PE iff their sequences have the same covariance matrix; thus we have $\eqref{Step_1}$. 
\end{proof}

Due to the special structures of $T_1$, $P_1$, $T_2$, and $P_2$, parameters of model 2 can be easily obtained from parameters of model 1 using $\eqref{Step_1}$ (see Appendix \ref{A1} for more details). Then, $P_2$ and $T_2$ are known. Note that parameters of model 2 so calculated are unique. This can be easily verified based on $\eqref{Step_1}$ for all models (see Appendix \ref{A1}). This uniqueness also follows from the definition of conditional expectation.

Clearly, AE models are PE. The next proposition relates dynamic noise and boundary values for two PE models to be AE.

\begin{proposition}\label{Equivalent_Construction_Lemma_2}
Two PE models $\eqref{Model_1}$ and $\eqref{Model_2}$ are AE if 
\begin{align}
T_2'(P_2)^{-1}w = T_1'(P_1)^{-1}v \label{Step_2}
\end{align}

\end{proposition}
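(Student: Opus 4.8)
The plan is to reduce the algebraic condition \eqref{Step_2} to the identity $x=y$ by exploiting the relation that each model imposes between its sequence and its noise/boundary vector. The key starting point is the identity already established in the proof of Proposition \ref{Equivalent_Construction_Lemma}: whenever the two models are PE, both sides of \eqref{Step_1} equal the same inverse covariance, i.e. $C^{-1}=T_1'P_1^{-1}T_1=T_2'P_2^{-1}T_2$.

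First I would substitute the model equations into the two sides of \eqref{Step_2}. Since $x$ obeys \eqref{Model_1}, its noise/boundary vector is $v=T_1x$, so that $T_1'P_1^{-1}v=T_1'P_1^{-1}T_1x=C^{-1}x$. Identically, $y$ obeys \eqref{Model_2}, giving $w=T_2y$ and $T_2'P_2^{-1}w=T_2'P_2^{-1}T_2y=C^{-1}y$. Thus each side of \eqref{Step_2} collapses to $C^{-1}$ applied to the respective sequence, and the hypothesis \eqref{Step_2} becomes $C^{-1}x=C^{-1}y$.

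The final step is to invert. Because the sequences are nonsingular Gaussian, $C$ and hence $C^{-1}$ are nonsingular (indeed $T_1,T_2,P_1,P_2$ are all nonsingular for every forward/backward Markov, reciprocal, $CM_L$, and $CM_F$ model under consideration). Multiplying $C^{-1}x=C^{-1}y$ on the left by $C$ yields $x=y$, which is precisely algebraic equivalence in the sense of Definition \ref{Explicit_Equivalent}.

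I do not anticipate a genuine obstacle: the argument is a single substitution followed by one inversion. The only points requiring care are bookkeeping and standing hypotheses, namely that $v=T_1x$ and $w=T_2y$ are taken in the correct direction (in each model the left-hand vector is the sequence and the right-hand vector is the noise/boundary data) and that the nonsingularity of $C^{-1}$ is available so that $C^{-1}x=C^{-1}y$ genuinely forces $x=y$ rather than merely equality of distributions. I would also note in passing that the same chain of equalities, read in reverse, shows \eqref{Step_2} is in fact also necessary, so the stated sufficient condition is sharp, though the proposition only claims sufficiency.
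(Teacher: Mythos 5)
Your proof is correct and is essentially the paper's own argument: the paper chains $T_1'P_1^{-1}T_1y = T_2'P_2^{-1}T_2y = T_2'P_2^{-1}w = T_1'P_1^{-1}v = T_1'P_1^{-1}T_1x$ and then cancels the nonsingular $C^{-1}=T_1'P_1^{-1}T_1$, which is exactly your substitution-plus-inversion. The only difference is that the paper, reading the proposition constructively (with $w$ \emph{defined} from $v$ via \eqref{Step_2}), first verifies $\mathrm{Cov}(w)=P_2$ — a step that is redundant under your equally valid reading in which both PE models are given with their stated noise covariances; your closing remark on necessity matches what the paper defers to Proposition \ref{Equivalent_Relation_Proposition} and Theorem \ref{Theorem_Alg}.
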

\begin{proof}
Let $P_2$, $T_2$, $P_1$, and $T_1$ be given (Proposition \ref{Equivalent_Construction_Lemma}). Given model $\eqref{Model_1}$, we show how $\eqref{Step_2}$ leads to an AE model $\eqref{Model_2}$. First, we show that $w$ has the desired covariance $P_2$. By $\eqref{Step_2}$, we have $T_2'(P_2)^{-1}\text{Cov}(w)(P_2)^{-1}T_2 =T_1'(P_1)^{-1}\text{Cov}(v)(P_1)^{-1}T_1$. From $\text{Cov}(v)=P_1$ and $\eqref{Step_1}$ it follows that $\text{Cov}(w)=P_2(T_2')^{-1}T_2'(P_2)^{-1}T_2(T_2)^{-1}P_2=P_2$. Thus, $w$ is the required vector.

Now we show that $\eqref{Step_2}$ implies that models $\eqref{Model_1}$ and $\eqref{Model_2}$ generate the same sample path of the sequence. We have
\begin{align*}
T_1'&(P_1)^{-1}T_1 y \stackrel{\eqref{Step_1}}{=} T_2'(P_2)^{-1} T_2 y \stackrel{\eqref{Model_2}}{=} T_2'(P_2)^{-1} w \\
& \stackrel{\eqref{Step_2}}{=} T_1 (P_1)^{-1} v \stackrel{\eqref{Model_1}}{=} T_1(P_1)^{-1}T_1 x \quad \implies y = x
\end{align*}
So, $\eqref{Model_2}$ and $\eqref{Model_1}$ are algebraically equivalent. 
\end{proof}

By Propositions \ref{Equivalent_Construction_Lemma} and \ref{Equivalent_Construction_Lemma_2}, given a model, one can construct an AE model. For two AE models, how are the sample paths of their dynamic noise and boundary values related? The next proposition answers this question.

\begin{proposition}\label{Equivalent_Relation_Proposition}
For two AE models considered
\begin{align}
T_1x=v \label{M1}\\
T_2y=w \label{M2}
\end{align}
the sample paths of $v$ and $w$ are related by $\eqref{Step_2}$, where $v=[v_{0}' , \ldots ,v_{N}']'$ and $w=[w_{0}' , \ldots ,w_{N}']'$ are vectors of the dynamic noise and boundary values with covariances $\text{Cov}(v)=P_1$ and $\text{Cov}(w)=P_2$, and the nonsingular matrices $T_1$ and $T_2$ are determined by the model parameters.

\end{proposition}
\begin{proof}
Algebraic equivalence (i.e., $x=y$) of $\eqref{M1}$ and $\eqref{M2}$ yields
\begin{align}
T_2^{-1}w=T_1^{-1}v \label{xi1_xi2_21}
\end{align}
It follows from the equivalence of $\eqref{M1}$ and $\eqref{M2}$ that 
\begin{align}
C^{-1}=T_1'P_1^{-1}T_1=
T_2'P_2^{-1}T_2\label{C_Inverse_D1D2}
\end{align}  
Then, using $\eqref{xi1_xi2_21}$ and $\eqref{C_Inverse_D1D2}$, we have $(T_2'P_2^{-1}T_2)T_2^{-1}w =(T_1'P_1^{-1}T_1)T_1^{-1}v$, 
which leads to $\eqref{Step_2}$. 
\end{proof}

\begin{remark}
$\eqref{Step_2}$ is equivalent to $\eqref{xi1_xi2_21}$.

\end{remark}

Although $\eqref{xi1_xi2_21}$ looks simpler, for constructing AE models, $\eqref{Step_2}$ is preferred because $P_1$ and $P_2$ in $\eqref{Step_2}$ for the models considered are block diagonal, and their inverses can be easily calculated, while calculation of the inverses of $T_1$ and $T_2$ in $\eqref{xi1_xi2_21}$ is not straightforward in general.

Theorem \ref{Theorem_Alg} follows from Propositions \ref{Equivalent_Construction_Lemma_2} and \ref{Equivalent_Relation_Proposition}.

\begin{theorem}\label{Theorem_Alg}
Two PE models $\eqref{Model_1}$ and $\eqref{Model_2}$ are AE iff $\eqref{Step_2}$ holds.

\end{theorem}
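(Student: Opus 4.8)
The plan is to establish the biconditional by proving its two implications separately, each of which is already available as a standalone proposition, and then simply to combine them. Since the theorem concerns two models that are assumed PE, the identity $\eqref{Step_1}$, namely $T_1'P_1^{-1}T_1 = T_2'P_2^{-1}T_2 = C^{-1}$, holds throughout by Proposition \ref{Equivalent_Construction_Lemma}, and this common value is nonsingular as the inverse of a covariance matrix. Keeping $\eqref{Step_1}$ available is what allows the two one-directional arguments to dovetail.

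For the ``if'' direction, I would assume that the sample paths of the noise-plus-boundary vectors $v$ and $w$ satisfy $\eqref{Step_2}$ and conclude $x=y$. This is exactly Proposition \ref{Equivalent_Construction_Lemma_2}: chaining $T_1'P_1^{-1}T_1\,y = T_2'P_2^{-1}T_2\,y = T_2'P_2^{-1}w = T_1'P_1^{-1}v = T_1'P_1^{-1}T_1\,x$, where the first equality uses $\eqref{Step_1}$, the second uses $\eqref{Model_2}$, the third uses $\eqref{Step_2}$, and the last uses $\eqref{Model_1}$. Left-multiplying by the inverse of $T_1'P_1^{-1}T_1=C^{-1}$ then yields $y=x$, i.e., algebraic equivalence.

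For the ``only if'' direction, I would assume the models are AE (so $x=y$) and recover $\eqref{Step_2}$. This is Proposition \ref{Equivalent_Relation_Proposition}: from $x=T_1^{-1}v$ and $y=T_2^{-1}w$ together with $x=y$ one obtains $T_2^{-1}w = T_1^{-1}v$, and left-multiplying by $C^{-1}=T_2'P_2^{-1}T_2=T_1'P_1^{-1}T_1$ gives $T_2'P_2^{-1}w = T_1'P_1^{-1}v$, which is $\eqref{Step_2}$. Here the nonsingularity of $T_1$ and $T_2$, guaranteed for all the models considered, is precisely what lets one pass freely between $x$, $y$ and the underlying noise vectors. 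Since AE models are PE, this implication applies to any pair of PE models that happen to be AE, matching the hypothesis of the theorem.

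I do not anticipate a genuine obstacle, as both implications are already proved; the theorem is essentially a repackaging of Propositions \ref{Equivalent_Construction_Lemma_2} and \ref{Equivalent_Relation_Proposition} into a single biconditional. The only points requiring care are bookkeeping ones: ensuring that the PE hypothesis is invoked to supply $\eqref{Step_1}$ so that the substitution $T_1'P_1^{-1}T_1 = T_2'P_2^{-1}T_2$ is legitimate in both chains, and reading $\eqref{Step_2}$ as a \emph{pathwise} identity between the random vectors $v$ and $w$ rather than merely as a relation between their covariances.
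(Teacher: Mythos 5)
Your proposal is correct and matches the paper exactly: the paper derives Theorem \ref{Theorem_Alg} by combining Proposition \ref{Equivalent_Construction_Lemma_2} (sufficiency) with Proposition \ref{Equivalent_Relation_Proposition} (necessity), which is precisely your two-implication structure, including the same chains of equalities and the same use of $\eqref{Step_1}$ and the nonsingularity of $T_1$, $T_2$, and $C^{-1}$. No gaps.
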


The uniqueness of parameters of equivalent models was discussed after the proof of Proposition \ref{Equivalent_Construction_Lemma}. The relationship between the dynamic noise of two AE models is unique since $\eqref{Step_2}$ is the same as $\eqref{xi1_xi2_21}$ which is the same as Definition \ref{Explicit_Equivalent}. So, we have the following remark.

\begin{remark}
By $\eqref{Step_1}$ and $\eqref{Step_2}$ (for a given form, i.e., Markov, reciprocal, $CM_L$, or $CM_F$) the probabilistically/algebraically equivalent model is unique.

\end{remark}

\section{Algebraically Equivalent Models: Examples}\label{Example}

Following Propositions \ref{Equivalent_Construction_Lemma} and \ref{Equivalent_Construction_Lemma_2}, AE models can be obtained. Two such examples are presented in this section, and more in appendices. Appendix \ref{A1} shows how parameters of PE models can be uniquely determined from each other (Proposition \ref{Equivalent_Construction_Lemma}). Appendix \ref{A2} shows how the dynamic noise and boundary values of AE models are related (Proposition \ref{Equivalent_Construction_Lemma_2}).

\subsection{Forward and Backward Markov Models}

By $\eqref{Step_1}$, parameters of a Markov BM $\eqref{Markov_Dynamic_Backward}$ are obtained from those of a Markov FM $\eqref{Markov_Dynamic_Forward}$. For $k=2, 3, \ldots, N$,
\begin{align}
(M_0^B)^{-1}=&M_0^{-1}+M_{1,0}'M_1^{-1}M_{1,0}\label{MfbP_1}\\
M_{0,1}^B=&M_0^BM_{1,0}'M_1^{-1}\label{MfbP_2}\\
(M_{k-1}^B)^{-1}=&M_{k-1}^{-1}+M_{k,k-1}'M_{k}^{-1}M_{k,k-1}- 
(M_{k-2,k-1}^B)'(M^B_{k-2})^{-1}M_{k-2,k-1}^B \label{MfbP_3}\\
M^B_{k-1,k}=&M^B_{k-1}M_{k,k-1}'M_{k}^{-1} \label{MfbP_4}\\
(M^B_N)^{-1}=&M_{N}^{-1}-(M^B_{N-1,N})'(M^B_{N-1})^{-1}M^B_{N-1,N}\label{MfbP_5}
\end{align}

By $\eqref{Step_2}$, the dynamic noise and boundary values of the two models are related by
\begin{align}
(M^B_0)^{-1}e^{BM}_0=&M_0^{-1}e^M_0-M_{1,0}'M_1^{-1}e^M_1\label{Mfb_1}\\
(M^B_{k})^{-1}e^{BM}_{k}=&(M^B_{k-1,k})'(M^B_{k-1})^{-1}e^{BM}_{k-1} + M_{k}^{-1}e^M_{k}
-M_{k+1,k}'M_{k+1}^{-1}e^M_{k+1}, k \in [1,N-1] \label{Mfb_2}\\
(M^B_N)^{-1}e^{BM}_N=&(M^B_{N-1,N})'(M^B_{N-1})^{-1}e^{BM}_{N-1}+
M_{N}^{-1}e^M_{N}\label{Mfb_3}
\end{align} 

By these equations, given a BM, one can obtain its AE FM.

As discussed in \cite{Wax_Kailath}--\cite{Alan_Willsky}, the two-filter smoother is based on fusing two estimates obtained from a forward filter and a backward filter. To obtain the smoothing estimate at time $k$, the forward/backward filter gives an estimate using all measurements before/after $k$. The forward/backward filter is based on an FM/BM. The estimate of the forward/backward filter is a function of the dynamic noise (and boundary values) of the FM/BM. To optimally fuse the two estimates, it is necessary to verify whether there is any correlation between the two estimates. Therefore, it is necessary to have a relationship in the dynamic noise between the FM and the BM. In other words, AE forward and backward models are required. Note that the noise relationship between the FM and the BM is not clear for PE models. So, PE models are not useful in deriving the two-filter smoother. The only existing approach to determining AE Markov FM and BM was presented in \cite{Verghese}. As clarified in \cite{Verghese}, in the case of singular state transition matrices, its approach does not work: its FM and BM are not AE, but only PE. Our $\eqref{MfbP_1}$--$\eqref{MfbP_5}$ and $\eqref{Mfb_1}$--$\eqref{Mfb_3}$ give AE FM and BM no matter if the state transition matrix is singular or nonsingular: $\eqref{Step_1}$ and $\eqref{Step_2}$ work no matter if $M_{k,k-1}$ and $M^B_{k,k+1}$ are singular or nonsingular. Note that there is no inverse of transition matrices in $\eqref{MfbP_1}$--$\eqref{MfbP_5}$ or $\eqref{Mfb_1}$--$\eqref{Mfb_3}$. Also, the approach of \cite{Verghese} is only for Markov models. Our approach is simple yet determines AE models for all models considered.

\subsection{Reciprocal $CM_L$ and Reciprocal Models}\label{R_CML_Par}

By $\eqref{Step_1}$, parameters of a reciprocal model are obtained from those of a reciprocal $CM_L$ model. For $\eqref{CML_Dynamic_Forward}$--$\eqref{CML_Forward_BC1}$, parameters of the reciprocal model are
\begin{align}
R^0_0&=G_0^{-1}+G_{1,0}'G_1^{-1}G_{1,0}+G_{N,0}'G_N^{-1}G_{N,0}\label{CML1_R}\\
R^0_k&=G_k^{-1}+G_{k+1,k}'G_{k+1}^{-1}G_{k+1,k}\label{CML2_R}, k \in [1,N-2] \\
R^0_{N-1}&=G_{N-1}^{-1}\label{CML3_R}\\
R^0_N&=G_N^{-1}+\sum _{k=1}^{N-1} G_{k,N}'G_k^{-1}G_{k,N}\label{CML4_R}\\
R^+_k&=G_{k+1,k}'G_{k+1}^{-1}, k \in [0,N-2] \label{CML5_R}\\
R^+_{N-1}&=G_{N-1}^{-1}G_{N-1,N}\label{CML6_R}\\
R^-_0&=-G_{1,0}'G_1^{-1}G_{1,N}+G_{N,0}'G_N^{-1} \label{CML8_R}
\end{align}
and for $\eqref{CML_Dynamic_Forward}$ and $\eqref{CML_Forward_BC2}$ we have $\eqref{CML2_R}$--$\eqref{CML3_R}$, $\eqref{CML5_R}$--$\eqref{CML6_R}$, and
\begin{align}
R^0_0&=G_0^{-1}+G_{1,0}'G_1^{-1}G_{1,0}\label{CML_1_R}
\end{align}
\begin{align}
R^0_{N}&=G_N^{-1}+\sum _{k=1}^{N-1} G_{k,N}'G_k^{-1}G_{k,N} + G_{0,N}'G_0^{-1}G_{0,N}\label{CML_2_R}\\
R^-_0&=G_0^{-1}G_{0,N}-G_{1,0}'G_1^{-1}G_{1,N}\label{CML_3_R}
\end{align}

By $\eqref{Step_2}$, the dynamic noise and boundary values of the two models are related by: for $\eqref{CML_Dynamic_Forward}$--$\eqref{CML_Forward_BC1}$, 
\begin{align}
e^R_0=&G_0^{-1}e_0-G_{1,0}'G_1^{-1}e_1-G_{N,0}'G_N^{-1}e_N\label{LR_1}\\
e^R_k=&G_k^{-1}e_k-G_{k+1,k}'G_{k+1}^{-1}e_{k+1}, k \in [1,N-2]\label{LR_2}\\
e^R_{N-1}=&G_{N-1}^{-1}e_{N-1}\label{LR_3}\\
e^R_N=&-\sum _{k=1}^{N-1}G_{k,N}'G_k^{-1}e_k+G_N^{-1}e_N\label{LR_4}
\end{align}
and for $\eqref{CML_Dynamic_Forward}$ and $\eqref{CML_Forward_BC2}$, replace $\eqref{LR_1}$ and $\eqref{LR_4}$ by
\begin{align}
e^R_0=&G_0^{-1}e_0-G_{1,0}'G_1^{-1}e_1\label{LR2_1}\\
e^R_N=&-\sum _{k=1}^{N-1}G_{k,N}'G_k^{-1}e_k+G_N^{-1}e_N-G_{0,N}'G_0^{-1}e_0\label{LR2_4}
\end{align}

By these equations, one can obtain an AE reciprocal $CM_L$ model from a reciprocal model. This is important because a reciprocal $CM_L$ model is easier to apply than a reciprocal model \cite{DD_Conf}. For example, estimation based on a $CM_L$ model is straightforward, but several papers were devoted to estimation based on a reciprocal model \cite{Bacca1}--\cite{Moura1}.

\section{More About Algebraically Equivalent Models}\label{Discussion}

\subsection{Models AE to a Reciprocal Model}\label{Reciprocal_Dirichlet}

This section presents two approaches for determining models AE to a reciprocal model $\eqref{Reciprocal_Dynamic}$ (along with $\eqref{Reciprocal_Dirichlet_BC2}$), or the other way round. The same approach works for boundary condition $\eqref{Reciprocal_Dirichlet_BC1}$.

We first show how to determine parameters of a reciprocal model $\eqref{Reciprocal_Dynamic}$ PE to a reciprocal $CM_L$ model $\eqref{CML_Dynamic_Forward}$ and $\eqref{CML_Forward_BC2}$. Model $\eqref{Reciprocal_Dynamic}$ is obtained based on conditional expectations \cite{Levy_Dynamic}, so its parameters are as given in Subsection \ref{R_CML_Par} for an NG reciprocal sequence (i.e., with a given covariance matrix). $\eqref{CML_Forward_BC2}$ and $\eqref{Reciprocal_Dirichlet_BC2}$ are the same since they are both obtained from the joint density of $x_0$ and $x_N$, which is the same for both reciprocal and reciprocal $CM_L$ models.

Similarly, from parameters of a reciprocal model $\eqref{Reciprocal_Dynamic}$ and $\eqref{Reciprocal_Dirichlet_BC2}$, we can uniquely determine parameters of its PE reciprocal $CM_L$ model $\eqref{CML_Dynamic_Forward}$ and $\eqref{CML_Forward_BC2}$. Also, by $\eqref{Step_1}$, parameters of other PE models can be determined. 

AE models are discussed next.

\subsubsection{The First Approach}

We show that the unified approach of Section \ref{General_Approach} (i.e., $\eqref{Step_2}$) works for models AE to a reciprocal model $\eqref{Reciprocal_Dynamic}$ and $\eqref{Reciprocal_Dirichlet_BC2}$.  

First, we determine the structure of $T$, $P$, and $\xi $ in $\eqref{Model}$ for model $\eqref{Reciprocal_Dynamic}$. We have
\begin{align}
\mathfrak{R}_rx&=e^{r}\label{Rex=e}
\end{align} 
where $e^{r} \triangleq [(e^{R}_0)' , \ldots , (e^{R}_N)']'$ and
\begin{align}\label{Re_c}
\mathfrak{R}_r=\left[ \begin{array}{cccccc}
I & 0 & 0 &  \cdots & 0 & -R_{0,N}\\
-R^-_1 & R^0_1 & -R^+_1 &  \cdots & 0 & 0\\
0 & -R^-_2 & R^0_2 & -R^+_2 & \cdots & 0\\
\vdots & \vdots & \vdots & \vdots & \vdots & \vdots \\
0 & 0 & \cdots & -R^-_{N-1} & R^0_{N-1} & -R^+_{N-1}\\
0 & 0 & 0 &  \cdots & 0 & I
\end{array}\right]
\end{align}
It is nonsingular because its submatrix of the block rows and columns $2$ to $N$ is nonsingular since $\eqref{R}$ is nonsingular. Its nonsingularity can be verified by the determinant of a partitioned matrix \cite{Handbook}. Also, the covariance of $e^{r}$ is 
\begin{align}\label{Re_cov}
R_r=\left[ \begin{array}{cccccc}
R^0_{0} & 0 & 0 &  \cdots & 0 & 0\\
0 & R^0_1 & -R^+_1 &  \cdots & 0 & 0\\
0 & -R^-_2 & R^0_2 & -R^+_2 & \cdots & 0\\
\vdots & \vdots & \vdots & \vdots & \vdots & \vdots \\
0 & 0 & \cdots & -R^-_{N-1} & R^0_{N-1} & 0\\
0 & 0 & 0 &  \cdots & 0 & R^0_{N}
\end{array}\right]
\end{align}
which is likewise nonsingular because model $\eqref{Reciprocal_Dynamic}$ is independent of boundary condition \cite{Levy_Dynamic}. 

With $\eqref{Re_c}$ and $\eqref{Re_cov}$, models AE to $\eqref{Reciprocal_Dynamic}$ and $\eqref{Reciprocal_Dirichlet_BC2}$ can be obtained by $\eqref{Step_2}$.

\subsubsection{The Second Approach}

In the first approach, $(R_r)^{-1}$ is required in $\eqref{Step_2}$, which is not desirable since $R_r$ is not block diagonal. In the following, we present a simple relationship in dynamic noise and boundary values between a reciprocal model and an AE reciprocal $CM_L$ model. 

It suffices to construct a reciprocal $CM_L$ model AE to a reciprocal model. Then, by Proposition \ref{Equivalent_Construction_Lemma_2} other AE models can be obtained. We show that $\eqref{eR=TeL}$ below makes a PE reciprocal model AE to a reciprocal $CM_L$ model $\eqref{CML_Dynamic_Forward}$ and $\eqref{CML_Forward_BC2}$:
\begin{align}
e^{r}&=T_{R|CM_L}e\label{eR=TeL}
\end{align}
where $T_{R|CM_L}$ is the nonsingular matrix 
\begin{align}\label{T_R_CML}
\left[ \begin{array}{ccccccc}
I & 0 & 0 &  0 & \cdots & 0 & 0 \\
0 & G_1^{-1} & -G_{2,1}'G_2^{-1} &  0 & \cdots & 0 & 0\\
0 & 0 & G_2^{-1} & -G_{3,2}'G_3^{-1} & \cdots & 0 & 0\\
0 & 0 & 0 & G_3^{-1} & \cdots & 0 & 0\\
\vdots & \vdots & \vdots & \vdots & \vdots & \vdots &  \vdots \\
0 & 0 & 0 & 0 & \cdots & G_{N-1}^{-1} & 0\\
0 & 0 & 0 & 0 & \cdots & 0 & I
\end{array}\right]
\end{align}
where $e \triangleq [e_0'  , \ldots  , e_N']'$ is the vector of dynamic noise and boundary values of the reciprocal $CM_L$ model and $e^{r} \triangleq [(e^{R}_0)'  , \ldots , (e^{R}_N)']'$ is that of the reciprocal model of \cite{Levy_Dynamic}. Following \cite{Levy_Dynamic}, part of $\eqref{eR=TeL}$ was used in \cite{White_Gaussian}.

Let $[e_k]$ be white (since it is for a reciprocal $CM_L$ model). We show that $[e^R_k]$ has the properties of reciprocal dynamic noise and boundary values. By $\eqref{T_R_CML}$, the covariance of $[e^{R}_k]_1^{N-1}$ is cyclic tridiagonal. So, $[e^{R}_k]_1^{N-1}$ can serve as dynamic noise of a reciprocal model $\eqref{Reciprocal_Dynamic}$. It is a function of $[e_k]_1^{N-1}$ with $e^{R}_0=e_0$ and $e^{R}_N=e_N$. Then, since $[e_k]$ is white, $[e^{R}_k]_1^{N-1}$ is uncorrelated with $e^{R}_0$ and $e^{R}_N$ and consequently with $x_0$ and $x_N$. Therefore, $[e^{R}_k]_1^{N-1}$ can serve as reciprocal dynamic noise, and $e^{R}_0$ and $e^{R}_N$ as boundary values.  

Now, we show that $\eqref{eR=TeL}$ leads to the same sample path of the sequence obeying the reciprocal $CM_L$ model and the reciprocal model. From $\eqref{eR=TeL}$, we have
\begin{align}
e^{R}_k=G_k^{-1}e_k-G_{k+1,k}'G_{k+1}^{-1}e_{k+1}, k \in [1,N-2]\label{e_R_k_1}
\end{align}
Substituting $e_k$ and $e_{k+1}$ of the $CM_L$ model $\eqref{CML_Dynamic_Forward}$ into $\eqref{e_R_k_1}$, after some manipulation, we get
\begin{align}\label{x_R_CML}
e^{R}_k=&(G_k^{-1}+G_{k+1,k}'G_{k+1}^{-1}G_{k+1,k})x_k-
G_k^{-1}G_{k,k-1}x_{k-1}-G_{k+1,k}'G_{k+1}^{-1}x_{k+1}+\nonumber \\
&( -G_k^{-1}G_{k,N}+G_{k+1,k}'G_{k+1}^{-1}G_{k+1,N})x_N
\end{align}
Using $\eqref{CML_Condition_Reciprocal}$, $\eqref{x_R_CML}$ becomes
\begin{align}\label{x_R_CML1}
e^{R}_k&=(G_k^{-1}+G_{k+1,k}'G_{k+1}^{-1}G_{k+1,k})x_k-G_k^{-1}G_{k,k-1}x_{k-1}-G_{k+1,k}'G_{k+1}^{-1}x_{k+1}
\end{align}
$\eqref{x_R_CML1}$ has the properties (the structure and parameters) of $\eqref{Reciprocal_Dynamic}$ and thus can serve as a reciprocal model for $k \in [1,N-2]$. In addition, for $k=N-1$, based on $\eqref{eR=TeL}$ we have
\begin{align}
e^{R}_{N-1}=G_{N-1}^{-1}e_{N-1}
\end{align} 
Substituting $e_{N-1}$ of $\eqref{CML_Dynamic_Forward}$, we have
\begin{align}\label{x_R_CML2}
&e^{R}_{N-1}=G_{N-1}^{-1}x_{N-1}-
G_{N-1}^{-1}G_{N-1,N-2}x_{N-2}- G_{N-1}^{-1}G_{N-1,N}x_N
\end{align}
$\eqref{x_R_CML2}$ can serve as a reciprocal model for $k=N-1$. So, by $\eqref{x_R_CML1}$, $\eqref{x_R_CML2}$ and since $\eqref{CML_Forward_BC2}$ and $\eqref{Reciprocal_Dirichlet_BC2}$ are identical, $\eqref{eR=TeL}$ leads to the same sample path of the sequence obeying the two models. In other words, the two models are AE.

Next, from a reciprocal model $\eqref{Reciprocal_Dynamic}$ and $\eqref{Reciprocal_Dirichlet_BC2}$, we construct its AE reciprocal $CM_L$ model. Calculation of the parameters of $\eqref{CML_Dynamic_Forward}$ and $\eqref{CML_Forward_BC2}$ from those of $\eqref{Reciprocal_Dynamic}$ and $\eqref{Reciprocal_Dirichlet_BC2}$ was discussed above. So, $T_{R|CM_L}$ is known. First, we show that $e$ in $\eqref{eR=TeL}$ has a (block) diagonal covariance matrix, i.e, $[e_k]$ is white (which is the case for a reciprocal $CM_L$ model). According to $\eqref{Reciprocal_Dynamic}$ and $\eqref{Reciprocal_Dirichlet_BC2}$, $e^{R}_0$ and $e^{R}_N$ are uncorrelated, and uncorrelated with $[e^{R}_k]_1^{N-1}$. By $\eqref{eR=TeL}$, we have $e_0=e^{R}_0$ and $e_N=e^{R}_N$. Also, $[e_k]_1^{N-1}$ are linear combinations of $[e^{R}_k]_1^{N-1}$. So, $e_0$ and $e_N$ are mutually uncorrelated and uncorrelated with $[e_k]_1^{N-1}$. Therefore, we only need to show that $[e_k]_1^{N-1}$ is white. The covariance of $[(e^R_1)' , \ldots , (e^R_{N-1})']'$ is $(R_r)_{[2:N,2:N]}$, (i.e., the $[2:N,2:N]$ submatrix of $R_r$ $\eqref{Re_cov}$). By $\eqref{eR=TeL}$, we have
\begin{align}
&(R_r)_{[2:N,2:N]}= (T_{R|CM_L})_{[2:N,2:N]}(\text{Cov}(e))_{[2:N,2:N]}(T_{R|CM_L})_{[2:N,2:N]}'\label{CovRL_0N}
\end{align}

Let $C$ be the covariance matrix of the reciprocal sequence. Now calculate $C^{-1}$ based on the reciprocal $CM_L$ model $\eqref{CML_Dynamic_Forward}$ and $\eqref{CML_Forward_BC2}$ (Appendix \ref{A1}). Then, we have the decomposition
\begin{align}
&(C^{-1})_{[2:N,2:N]}= (T_{R|CM_L})_{[2:N,2:N]}G_{[2:N,2:N]}(T_{R|CM_L})_{[2:N,2:N]}'\label{C_i_0N}
\end{align}
where $G_{[2:N,2:N]}=\text{diag}(G_1,\ldots,G_{N-1})$. By $\eqref{C_Inverse_Reciprocal_Rec}$--$\eqref{R}$, we have $(R_r)_{[2:N,2:N]}=$ $(C^{-1})_{[2:N,2:N]}$. Comparing $\eqref{C_i_0N}$ and $\eqref{CovRL_0N}$, we have $(\text{Cov}(e))_{[2:N,2:N]}=G_{[2:N,2:N]}$, meaning that $[e_k]_1^{N-1}$ is white. So, $[e_k]$ is white.

Next, we show that $\eqref{eR=TeL}$ leads to algebraic equivalence of the reciprocal model and the reciprocal $CM_L$ model. $\eqref{eR=TeL}$ for $k=N-1$ is
\begin{align}
e^{R}_{N-1}=G_{N-1}^{-1}e_{N-1}\label{e_R_N-1}
\end{align}
Using $e^{R}_{N-1}$ from the reciprocal model $\eqref{Reciprocal_Dynamic}$, we obtain
\begin{align*}
R^0_{N-1}x_{N-1}-R^-_{N-1}x_{N-2}-R^+_{N-1}x_N=G_{N-1}^{-1}e_{N-1}
\end{align*}
Expressing $R^0_{N-1}$, $R^-_{N-1}$, and $R^+_{N-1}$ of the reciprocal model in terms of parameters of the reciprocal $CM_L$ model (specifically $\eqref{CML3_R}$, $\eqref{CML5_R}$, $\eqref{CML6_R}$) yields
\begin{align*}
&G_{N-1}^{-1}x_{N-1}-(G_{N-1}^{-1}G_{N-1,N-2})x_{N-2}-
(G_{N-1}^{-1}G_{N-1,N})x_N  =G_{N-1}^{-1}e_{N-1}
\end{align*}
which leads to 
\begin{align}
x_{N-1}-G_{N-1,N-2}x_{N-2} - G_{N-1,N}x_N=e_{N-1}\label{RL_N-1}
\end{align}
Clearly $\eqref{RL_N-1}$ is a $CM_L$ model $\eqref{CML_Dynamic_Forward}$ for $k=N-1$ with $e_{N-1}$ related to $e^R_{N-1}$ by $\eqref{e_R_N-1}$. Then, By $\eqref{eR=TeL}$, for $k \in [1,N-2]$, we have 
\begin{align}
e^{R}_k=G_k^{-1}e_k - G_{k+1,k}'G_{k+1}^{-1}e_{k+1}\label{T1}
\end{align}
Substituting $e^{R}_k$ of the reciprocal model $\eqref{Reciprocal_Dynamic}$ into $\eqref{T1}$ yields
\begin{align}
&R^0_{k}x_{k}-R^-_{k}x_{k-1}-R^+_{k}x_{k+1}=  G_k^{-1}e_k - G_{k+1,k}'G_{k+1}^{-1}e_{k+1}\label{RL_1_k}
\end{align}
Substituting $e_{k+1}$ from the reciprocal $CM_L$ model $\eqref{CML_Dynamic_Forward}$ into $\eqref{RL_1_k}$, we obtain
\begin{align}
(G_k^{-1}+G_{k+1,k}'G_{k+1}^{-1}G_{k+1,k})x_k-&
G_k^{-1}G_{k,k-1}x_{k-1}-G_{k+1,k}'G_{k+1}^{-1}x_{k+1}=\nonumber \\
&G_k^{-1}e_k -G_{k+1,k}'G_{k+1}^{-1}(x_{k+1}- G_{k+1,k}x_k-G_{k+1,N}x_N)\label{RL_2_k}
\end{align}
After manipulation, $\eqref{RL_2_k}$ becomes
\begin{align}
G_k^{-1}x_k-&G_k^{-1}G_{k,k-1}x_{k-1}-G_{k+1,k}'G_{k+1}^{-1}G_{k+1,N}x_N=G_k^{-1}e_k\label{RL_3_k}
\end{align}
Using $\eqref{CML_Condition_Reciprocal}$ for the coefficient of $x_N$ in $\eqref{RL_3_k}$, $\eqref{RL_3_k}$ leads to 
\begin{align}
x_k-G_{k,k-1}x_{k-1}-G_{k,N}x_N=e_k\label{RL_k}
\end{align}
This is a $CM_L$ model $\eqref{CML_Dynamic_Forward}$ for $k \in [1,N-2]$ with $[e_k]_1^{N-2}$ related to $[e^R_k]_1^{N-2}$ by $\eqref{eR=TeL}$. Also, the two models have identical boundary conditions. So, $\eqref{eR=TeL}$ forces the two models to have the same sample paths. That is, using $\eqref{eR=TeL}$, the reciprocal model and the reciprocal $CM_L$ model are AE.

\subsection{Parameters of PE Markov and Reciprocal Models}

By $\eqref{Step_1}$, parameters of PE models can be uniquely determined (Appendix \ref{A1}). In some cases given parameters of a model, one can calculate parameters of a PE model in different ways. Due to the uniqueness, the apparently different results must be the same. For example, in the following we consider an approach (different from $\eqref{Step_1}$) for calculating parameters of a reciprocal model PE to a Markov model. Then, we show that the results are actually the same as those of Appendix \ref{A1}. 

Given a Markov model $\eqref{Markov_Dynamic_Forward}$ of $[x_k]$, by $\eqref{Step_1}$, parameters of a PE reciprocal model $\eqref{Reciprocal_Dynamic}$ are (Appendices \ref{Markov} and \ref{Reciprocal}), for $k \in [1,N-1]$,
\begin{align}
R^0_k&=M_{k}^{-1}+M_{k+1,k}'M_{k+1}^{-1}M_{k+1,k}\label{R0}\\
R^+_k&=M_{k+1,k}'M_{k+1}^{-1}\label{R+}\\
R^-_k&=M_{k}^{-1}M_{k,k-1}\label{R-}
\end{align} 
Parameters of the reciprocal model $\eqref{Reciprocal_Dynamic}$ can be also obtained as follows. The transition density of $[x_k]$ is 
\begin{align}
p(x_k|x_{k-1})=\mathcal{N}(x_k;M_{k,k-1}x_{k-1},M_{k})\label{Markov_Transition}
\end{align}
By the Markov property, we have
\begin{align*}
p(x_k|x_{k-1},x_{k+1})&=\frac{p(x_k|x_{k-1})p(x_{k+1}|x_k)}{p(x_{k+1}|x_{k-1})}\\
&=\mathcal{N}(x_k;R_{k,k-1}x_{k-1}+R_{k,k+1}x_{k+1},R_k)
\end{align*}
Then, we define $r_k$ as
\begin{align}
r_k = x_k - R_{k,k-1}x_{k-1} - R_{k,k+1}x_{k+1}\label{Reciprocal_2}
\end{align}
where the covariance of $r_k$ is $R_k$ and
\begin{align*}
R_{k,k-1}&=M_{k,k-1}-(M_{k}^{-1}+M_{k+1,k}'M_{k+1}^{-1}M_{k+1,k})^{-1}M_{k+1,k}'M_{k+1}^{-1}M_{k+1,k}M_{k,k-1}\\
R_{k,k+1}&=(M_{k}^{-1}+M_{k+1,k}'M_{k+1}^{-1}M_{k+1,k})^{-1}
M_{k+1,k}'M_{k+1}^{-1}\\
R_k&=(M_{k}^{-1}+M_{k+1,k}'M_{k+1}^{-1}M_{k+1,k})^{-1}
\end{align*}
Pre-multiplying both sides of $\eqref{Reciprocal_2}$ by $R^0_k$ (which is nonsingular), we obtain
\begin{align}
R^0_kx_k=R^0_kR_{k,k-1}x_{k-1}+R^0_kR_{k,k+1}x_{k+1}+R^0_kr_k\label{Reciprocal_3}
\end{align}
By the uniqueness of parameters, we must have $R^0_kR_{k,k-1}=R^-_k$, $R^0_kR_{k,k+1}=R^+_k$, and $\text{Cov}(R^0_kr_k)=R^0_k$. Comparing the parameters of $\eqref{Reciprocal_3}$ with $\eqref{R0}$, $\eqref{R+}$, and $\eqref{R-}$, it is not clear that $R^0_kR_{k,k-1}=R^-_k$, which, however, can be verified using
\begin{align*}
(M_{k}^{-1}+&M_{k+1,k}'M_{k+1}^{-1}M_{k+1,k})^{-1}(M_{k}^{-1}+M_{k+1,k}'M_{k+1}^{-1}M_{k+1,k})M_{k,k-1}=M_{k,k-1}
\end{align*}

\section{Markov Models and Reciprocal/$CM_L$ Models}\label{Classes}

An important question in the theory of reciprocal processes is about Markov processes sharing by the same reciprocal evolution law \cite{Levy_Class}, \cite{Jamison_Reciprocal}. It is desired to determine Markov transition models (i.e., without the initial condition) of Markov sequences, which obey a reciprocal $CM_L$ model (and an arbitrary boundary condition). Also, given two Markov transition models, it is desired to determine whether their sequences share the same $CM_L$ transition model (i.e., without boundary conditions). Studying such issues will gain a better understanding of the models and sequences, and is useful for their application. For example, \cite{CM_Part_II_B_Conf} discussed $CM_L$ models \textit{induced} by Markov models for trajectory modeling with destination information, and showed that inducing a $CM_L$ model by a Markov model is useful for parameter design of a reciprocal $CM_L$ model, and that a reciprocal $CM_L$ model can be induced by any Markov model whose sequences obey the given reciprocal $CM_L$ model (and some boundary condition). So, it is desired to determine all such Markov models and their relationships. In the following, a simple approach is presented for studying and determining different Markov models whose sequences share the same reciprocal/$CM_L$ model.  

Relationships between different models (and their boundary conditions) can be studied based on the entries of $C^{-1}$ calculated from the models and their boundary conditions. Some entries of $C^{-1}$ depend on model parameters only and others depend also on boundary conditions  (Appendix \ref{A1}). Proofs of the following results are based on Appendix \ref{A1}. 

The next proposition gives conditions for Markov sequences of Markov models to share the same reciprocal model.

\begin{proposition}\label{Lem_5}
Two Markov sequences having by Markov models $\eqref{Markov_Dynamic_Forward}$ with parameters $M^{(i)}_{k,k-1}$, $M^{(i)}_k$, $k \in [1,N]$, $i=1,2$, share the same reciprocal model $\eqref{Reciprocal_Dynamic}$ iff
\begin{align}
&(M^{(1)}_{k})^{-1}+(M^{(1)}_{k+1,k})'(M^{(1)}_{k+1})^{-1}
M^{(1)}_{k+1,k}=\nonumber \\
&  (M^{(2)}_{k})^{-1}+(M^{(2)}_{k+1,k})'(M^{(2)}_{k+1})^{-1}
M^{(2)}_{k+1,k}, k \in [1,N-1] \label{MfR_1}\\
&(M^{(1)}_{k+1,k})'(M^{(1)}_{k+1})^{-1}=
(M^{(2)}_{k+1,k})'(M^{(2)}_{k+1})^{-1},  k \in [0,N-1] \label{MfR_2}
\end{align} 

\end{proposition}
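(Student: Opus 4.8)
The plan is to build on the explicit formulas $\eqref{R0}$--$\eqref{R-}$ derived in the subsection ``Parameters of PE Markov and Reciprocal Models,'' which give the interior parameters $R^0_k, R^+_k, R^-_k$ of the reciprocal model induced by a Markov model purely in terms of the Markov parameters $M_{k,k-1}, M_k$, with no dependence on any boundary condition. By Proposition \ref{Equivalent_Construction_Lemma} and the uniqueness remark that follows it, each Markov model determines these reciprocal parameters uniquely, and conversely the reciprocal transition model $\eqref{Reciprocal_Dynamic}$ obeyed by a given sequence is unique (it is fixed by the conditional density of $x_k$ given $x_{k-1},x_{k+1}$). Hence two Markov sequences share the same reciprocal transition model if and only if the reciprocal parameters they induce via $\eqref{R0}$--$\eqref{R-}$ coincide at every interior index; this single observation reduces the proposition to comparing the two parameter families, and establishes both directions of the ``iff'' simultaneously.

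The next step is to spell out the coincidence of the three parameter families separately for the two models $i=1,2$. Equality of $R^0_k$ on $k \in [1,N-1]$, read off $\eqref{R0}$, is precisely $\eqref{MfR_1}$. Equality of $R^+_k$ on $k \in [1,N-1]$, read off $\eqref{R+}$, is $(M^{(1)}_{k+1,k})'(M^{(1)}_{k+1})^{-1} = (M^{(2)}_{k+1,k})'(M^{(2)}_{k+1})^{-1}$ on the range $[1,N-1]$. Equality of $R^-_k$ on $k \in [1,N-1]$, read off $\eqref{R-}$, is $(M^{(1)}_k)^{-1}M^{(1)}_{k,k-1} = (M^{(2)}_k)^{-1}M^{(2)}_{k,k-1}$; transposing both sides and relabeling $k \to k+1$ rewrites this as the identical expression $(M^{(i)}_{k+1,k})'(M^{(i)}_{k+1})^{-1}$ agreement, now on the range $[0,N-2]$.

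The key step is to recognize that the $R^+$ condition and the transposed $R^-$ condition are literally the same relation, imposed on the two overlapping index sets $[1,N-1]$ and $[0,N-2]$, whose union is exactly $[0,N-1]$; on the overlap $[1,N-2]$ they are mutually consistent via the built-in reciprocal identity $R^+_k=(R^-_{k+1})'$, so nothing is double-counted. The boundary index $k=0$ is supplied by $R^-_1$ (at $k=1$, transposed), and $k=N-1$ by $R^+_{N-1}$. Thus the two off-diagonal conditions collapse into $\eqref{MfR_2}$, and together with $\eqref{MfR_1}$ from the diagonal they are equivalent to coincidence of all induced reciprocal parameters, completing the proof.

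The main obstacle I anticipate is bookkeeping rather than conceptual: I must check that $\eqref{R0}$--$\eqref{R-}$ genuinely exhaust every interior reciprocal parameter appearing in $\eqref{Reciprocal_Dynamic}$, that the transpose identity $R^+_k=(R^-_{k+1})'$ is honored so that the $R^+$ and $R^-$ agreements reduce to one relation, and that the ranges $[1,N-1]$ and $[0,N-2]$ union cleanly to the range $[0,N-1]$ stated in $\eqref{MfR_2}$ with both endpoints accounted for. It is also worth confirming at the outset that ``the same reciprocal model'' here refers to the transition parameters of $\eqref{Reciprocal_Dynamic}$ alone and not its boundary condition, consistent with the proposition concerning Markov transition models; since $\eqref{R0}$--$\eqref{R-}$ are boundary-free, this is exactly what makes the characterization depend only on the stated Markov parameters.
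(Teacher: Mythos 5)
Your proposal is correct and follows essentially the same route as the paper: the paper's proof identifies ``sharing the same reciprocal model'' with agreement of the interior blocks $A_1,\ldots,A_{N-1},B_0,\ldots,B_{N-1}$ of $C^{-1}$ and reads off $\eqref{MfR_1}$--$\eqref{MfR_2}$ from the Markov expressions for these blocks (Appendix \ref{A1}), which are exactly your boundary-free parameters $\eqref{R0}$--$\eqref{R-}$ under the identification $R^0_k=A_k$, $R^+_k=-B_k$, $R^-_{k+1}=-B_k'$. Your explicit merging of the $R^+$ range $[1,N-1]$ with the transposed $R^-$ range $[0,N-2]$ into $[0,N-1]$ is just a more detailed account of the paper's direct comparison of $B_0,\ldots,B_{N-1}$, so the two arguments coincide in substance.
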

\begin{proof}
Two sequences share the same reciprocal model iff their $C^{-1}$ $\eqref{Cyclic_Tridiagonal}$ have the same entries $A_1,A_2,\ldots ,A_{N-1},B_0,$ $B_1,\ldots ,B_{N-1}$. So, two Markov sequences having Markov models with parameters $M^{(i)}_{k,k-1}$, $M^{(i)}_k$, $k \in [1,N]$, $i=1,2$, share the same reciprocal model iff $\eqref{MfR_1}$--$\eqref{MfR_2}$ hold. 
\end{proof}

Sequences having any Markov model $\eqref{Markov_Dynamic_Forward}$ satisfying 
\begin{align}
R^0_k=&M_{k}^{-1}+M_{k+1,k}'M_{k+1}^{-1}M_{k+1,k}, k \in [1,N-1]\label{M_R_1}\\
R^+_k=&M_{k+1,k}'M_{k+1}^{-1}, k \in [0,N-1] \label{M_R_2}
\end{align}
share a given reciprocal model (with some boundary condition) (see Proposition \ref{Lem_5}). Therefore, all Markov models whose sequences share a reciprocal model are determined.

\begin{proposition}\label{Lem_2}
Two sequences share the same reciprocal model $\eqref{Reciprocal_Dynamic}$ iff they share the same reciprocal $CM_L$ model $\eqref{CML_Dynamic_Forward}$ ($c=N$).

\end{proposition}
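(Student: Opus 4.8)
The plan is to reduce both statements to equality of the same subset of entries of $C^{-1}$, mirroring the proof of Proposition \ref{Lem_5}. Since \eqref{Reciprocal_Dynamic} and \eqref{CML_Dynamic_Forward} (with $c=N$) are transition models, ``sharing the same model'' means sharing the same transition parameters, boundary conditions aside. By the proof of Proposition \ref{Lem_5}, two sequences share the same reciprocal model iff their $C^{-1}$ \eqref{Cyclic_Tridiagonal} agree on the entries $A_1,\ldots,A_{N-1},B_0,\ldots,B_{N-1}$. I would therefore establish that sharing the same reciprocal $CM_L$ model is characterized by equality of exactly this same set of entries; the equivalence then follows at once.

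First I would record the forward map. Using the formulas of Subsection \ref{R_CML_Par} that come from \eqref{Step_1}---namely \eqref{CML2_R}, \eqref{CML3_R}, \eqref{CML5_R}, \eqref{CML6_R}---the reciprocal transition parameters $R^0_k,R^+_k$ (and hence $R^-_k=(R^+_{k-1})'$ by the symmetry of \eqref{R}) are explicit functions of the reciprocal $CM_L$ transition parameters $G_k$, $G_{k,k-1}$, $G_{N-1,N}$. This gives one direction immediately: if two sequences have identical reciprocal $CM_L$ transition parameters $\{G_{k,k-1},G_{k,N},G_k : k\in[1,N-1]\}$, then by these formulas they have identical $R^0_k,R^\pm_k$, i.e. the same reciprocal model.

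The substance is the converse, inverting the correspondence. Assuming the two sequences share the same reciprocal transition parameters, I would reconstruct the $CM_L$ parameters by a backward recursion: $G_{N-1}=(R^0_{N-1})^{-1}$ from \eqref{CML3_R}, then $G_k=(R^0_k-R^+_k G_{k+1}(R^+_k)')^{-1}$ from \eqref{CML2_R} and \eqref{CML5_R}, then $G_{k,k-1}=G_k(R^+_{k-1})'$ from \eqref{CML5_R}, and $G_{N-1,N}=G_{N-1}R^+_{N-1}$ from \eqref{CML6_R}. Each is uniquely determined by the shared reciprocal parameters, so they agree for the two sequences. The one family not visible in \eqref{CML2_R}--\eqref{CML6_R} is $\{G_{k,N}:k\in[1,N-2]\}$, and here is the crux: I would invoke the reciprocal characterization \eqref{CML_Condition_Reciprocal} of Remark \ref{CMc_R_M}, which for a reciprocal $CM_L$ model forces $G_{k,N}=G_k G_{k+1,k}'G_{k+1}^{-1}G_{k+1,N}$. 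Seeded by the common $G_{N-1,N}$ and driven by the already-matched $G_k,G_{k,k-1}$, this recursion determines all remaining $G_{k,N}$ identically for both sequences; hence they share the same reciprocal $CM_L$ model.

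The main obstacle is exactly this last point: the reciprocal transition parameters appear to ``forget'' the individual $G_{k,N}$ for $k<N-1$ (only $G_{N-1,N}$ enters \eqref{CML6_R} directly), so without extra structure one could not recover them. The reciprocal condition \eqref{CML_Condition_Reciprocal} is precisely what supplies the missing recursion, and verifying that it pins down each $G_{k,N}$ uniquely---together with checking nonsingularity of the inverted blocks, which holds since the $G_k$ and $R^0_k$ are genuine positive definite covariances of an NG sequence---is the part requiring care. An alternative, fully $C^{-1}$-based route would argue directly (via Appendix \ref{A1}) that a reciprocal $CM_L$ matrix \eqref{Block_CML} and a cyclic tridiagonal matrix \eqref{Cyclic_Tridiagonal} share the same interior entries, but the parameter-level bijection above is more transparent.
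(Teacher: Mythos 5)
Your proof is correct and takes essentially the same route as the paper: the paper's one-line proof reduces both conditions to equality of the interior entries $A_1,\ldots,A_{N-1},B_0,\ldots,B_{N-1}$ of $C^{-1}$ in $\eqref{Cyclic_Tridiagonal}$, and your parameter-level bijection (forward via $\eqref{CML2_R}$--$\eqref{CML6_R}$, backward via the recursion inverting them) is exactly the content the paper delegates to Appendix \ref{A1}. Your explicit use of $\eqref{CML_Condition_Reciprocal}$ to recover the $G_{k,N}$ for $k<N-1$ --- equivalently, the vanishing of the interior $D_k$ entries for a reciprocal $CM_L$ model --- correctly supplies the one detail the paper's terse proof leaves implicit.
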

\begin{proof}
Two sequences share the same reciprocal model $\eqref{Reciprocal_Dynamic}$ (reciprocal $CM_L$ model $\eqref{CML_Dynamic_Forward}$ ($c=N$)) iff their $C^{-1}$ $\eqref{Cyclic_Tridiagonal}$ have the same entries $A_1,\ldots ,A_{N-1}$ $,B_0, \ldots ,B_{N-1}$, that is, iff they share the same reciprocal $CM_L$ model $\eqref{CML_Dynamic_Forward}$ ($c=N$). 
\end{proof}

By Proposition \ref{Lem_2} and $\eqref{M_R_1}$--$\eqref{M_R_2}$ we can determine all Markov models whose sequences share a reciprocal $CM_L$ model $\eqref{CML_Dynamic_Forward}$. All we need to do is to replace the model parameters in $\eqref{M_R_1}$--$\eqref{M_R_2}$ (i.e., $R^0_k$ and $R^+_k$) with the corresponding (block) entries of the $C^{-1}$ calculated from the parameters of $\eqref{CML_Dynamic_Forward}$ (see Subsection \ref{R_CML_Par} or Appendix \ref{A1}).

The following proposition determines conditions for two Markov sequences sharing the same reciprocal transition model to share the same Markov transition model.

\begin{proposition}\label{Lem_4}
Two Markov sequences sharing the same reciprocal model $\eqref{Reciprocal_Dynamic}$ share the same Markov model $\eqref{Markov_Dynamic_Forward}$ iff for the parameters of $\eqref{Reciprocal_Cyclic_BC2}$ we have 
\begin{align}
(R^0_N)^{(1)}&=(R^0_N)^{(2)}\label{RfM_1}
\end{align}
or equivalently $M_N^{(1)}=M_N^{(2)}$, where the superscripts $(1)$ and $(2)$ correspond to the first and the second sequence.

\end{proposition}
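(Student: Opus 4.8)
The plan is to translate both ``sharing the same reciprocal model'' and ``sharing the same Markov transition model'' into conditions on the block entries of $C^{-1}$, as in Proposition \ref{Lem_5} and Appendix \ref{A1}, and then to show that the single extra requirement $\eqref{RfM_1}$ exactly bridges the two. By the proof of Proposition \ref{Lem_5}, the two sequences share the reciprocal model $\eqref{Reciprocal_Dynamic}$ iff their $C^{-1}$ agree in the transition blocks $A_1,\ldots,A_{N-1}$ and $B_0,\ldots,B_{N-1}$, which under the identification $A_k=R^0_k$, $B_k=-R^+_k$ are the reciprocal transition parameters $R^0_k$ ($k\in[1,N-1]$) and $R^+_k$ ($k\in[0,N-1]$). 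For a Markov sequence $C^{-1}$ is tridiagonal ($D_0=0$) and, by $\eqref{R0}$ and $\eqref{R+}$, its blocks are $A_k=M_k^{-1}+M_{k+1,k}'M_{k+1}^{-1}M_{k+1,k}$ on $[1,N-1]$, $B_k=-M_{k+1,k}'M_{k+1}^{-1}$ on $[0,N-1]$, together with the bottom-right block $A_N=M_N^{-1}$. This last block equals the parameter $R^0_N$ of $\eqref{Reciprocal_Cyclic_BC2}$ and, being a \emph{boundary} block, is not fixed by the shared transition data; in particular $(R^0_N)^{(1)}=(R^0_N)^{(2)}$ iff $M_N^{(1)}=M_N^{(2)}$, which is the equivalence asserted in the statement.

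For the ``if'' direction I would run a backward recursion reconstructing the entire transition model from $M_N=(R^0_N)^{-1}$ and the shared data $\{R^+_k\}_{0}^{N-1}$, $\{R^0_k\}_{1}^{N-1}$: from $R^+_{N-1}=M_{N,N-1}'M_N^{-1}$ one gets $M_{N,N-1}$, then $R^0_{N-1}=M_{N-1}^{-1}+M_{N,N-1}'M_N^{-1}M_{N,N-1}$ gives $M_{N-1}$, and alternating between $\eqref{R+}$ and $\eqref{R0}$ one descends to recover $M_{k,k-1}$ and $M_k$ for every $k\in[1,N]$. Because both sequences share $\{R^0_k\}_1^{N-1}$, $\{R^+_k\}_0^{N-1}$ and, by $\eqref{RfM_1}$, the same $M_N$, each step returns identical matrices, so the transition models coincide. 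The ``only if'' direction is immediate: equal transition models force $M_N^{(1)}=M_N^{(2)}$, hence $(R^0_N)^{(1)}=A_N^{(1)}=A_N^{(2)}=(R^0_N)^{(2)}$.

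The crux is not a hard estimate but the bookkeeping observation that the shared reciprocal transition blocks determine all of $M_1,\ldots,M_N$ and $M_{1,0},\ldots,M_{N,N-1}$ once, and only once, the single boundary block $A_N=M_N^{-1}$ is fixed, so $\eqref{RfM_1}$ supplies exactly the missing datum. I would also note explicitly that $M_0=\mathrm{Cov}(x_0)$ is the initial covariance and is \emph{not} part of the transition model $\eqref{Markov_Dynamic_Forward}$; the block $A_0=R^0_0$ is correspondingly irrelevant here, which is why matching $A_N$ alone (and not also $A_0$) suffices. Finally, the recursion is well defined because, the sequences being NG, every $M_k$ is nonsingular and each inverse above exists; in particular no inverse of a transition matrix $M_{k,k-1}$ is required.
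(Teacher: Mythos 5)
Your proof is correct and takes essentially the same route as the paper: the paper likewise reduces both sharing conditions to agreement of blocks of $C^{-1}$ ($A_1,\ldots,A_{N-1},B_0,\ldots,B_{N-1}$ for the reciprocal model, $A_1,\ldots,A_N,B_0,\ldots,B_{N-1}$ for the Markov transition model), so that the single missing datum is $A_N=R^0_N=M_N^{-1}$. The backward recursion you write out for the ``if'' direction is precisely the reconstruction of Lemma \ref{Markov_Dynamic}(i) in Appendix \ref{A1}, which the paper's proof invokes implicitly.
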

\begin{proof}
Two reciprocal sequences share the same reciprocal model iff their $C^{-1}$ $\eqref{Cyclic_Tridiagonal}$ have the same entries $A_1,\ldots,A_{N-1},B_0,\ldots,$ $B_{N-1}$. Two Markov sequences share the same Markov model iff their $C^{-1}$ $\eqref{Tridiagonal}$ have the same entries $A_1,\ldots ,A_{N},$ $B_0,\ldots ,B_{N-1}$. So, two Markov sequences sharing the same reciprocal model share the same Markov model iff they have the same $A_N$, i.e., $\eqref{RfM_1}$ holds (see $\eqref{R1-App}$).
\end{proof}

More general relationships between different models considered can be studied based on the entries of $C^{-1}$ calculated from the models and the boundary conditions. In general, we can obtain conditions for two sequences sharing the same transition model to share the same transition model of different type.

\section{Summary and Conclusions}\label{Summary}

Conditionally Markov (CM) sequences are powerful tools for problem modeling. Markov, reciprocal, $CM_L$, and $CM_F$ are some CM classes. Their dynamic models are important for application. Relationships between models of different classes of nonsingular Gaussian CM sequences have been studied. The notion of algebraically equivalent models has been defined and a unified approach has been presented for determining algebraically or probabilistically equivalent models. This approach is simple and nonrestrictive (e.g., no need to assume nonsingularity for the matrix coefficients of the models, as the existing results do).  

An important question in the theory of reciprocal processes is regarding Markov processes sharing the same reciprocal evolution law. A simple approach has been presented for studying and determining Markov models whose sequences share the same reciprocal/$CM_L$ model.   

Singular/nonsingular Gaussian CM and reciprocal sequences were studied in \cite{CM_SingularNonsingular}--\cite{Thesis_Reza}.


\appendices

\section{Probabilistically Equivalent Models}\label{A1}

Parameters of PE models can be calculated by $\eqref{Step_1}$. Since there are several different models, to save space, it suffices to show i) how to write the entries of the inverse of the covariance matrix of a sequence, $C^{-1}$, in terms of the parameters of the model and boundary condition of the sequence, ii) given $C^{-1}$, how to calculate parameters of a model and its boundary condition from the entries of $C^{-1}$. Then, based on (i) and (ii), from parameters of a model and its boundary condition, parameters of any PE model and its boundary condition can be uniquely determined. 
  
\subsection{$CM_L$ Sequences}\label{CML}

\subsubsection{Forward $CM_L$ Model ($c=N$)}\label{CML_Forward}

For $\eqref{CML_Dynamic_Forward}$:    
\begin{align*}
A_k&=G_k^{-1}+G_{k+1,k}'G_{k+1}^{-1}G_{k+1,k}, k \in [1,N-2]  \\
A_{N-1}&=G_{N-1}^{-1}\\
B_k&=-G_{k+1,k}'G_{k+1}^{-1}, k \in [0,N-2] \\
B_{N-1}&=-G_{N-1}^{-1}G_{N-1,N}\\
D_k&=-G_k^{-1}G_{k,N}+G_{k+1,k}'G_{k+1}^{-1}G_{k+1,N}, k \in [1,N-2] 
\end{align*}
for boundary condition $\eqref{CML_Forward_BC1}$:
\begin{align*}
A_0&=G_0^{-1}+G_{1,0}'G_1^{-1}G_{1,0}+G_{N,0}'G_N^{-1}G_{N,0}\\
A_N&=G_N^{-1}+\sum _{k=1}^{N-1} G_{k,N}'G_k^{-1}G_{k,N}\\
D_0&=G_{1,0}'G_1^{-1}G_{1,N}-G_{N,0}'G_N^{-1}
\end{align*}
and for $\eqref{CML_Forward_BC2}$:
\begin{align*}
A_0&=G_0^{-1}+G_{1,0}'G_1^{-1}G_{1,0}\\
A_{N}&=G_N^{-1}+\sum _{k=1}^{N-1} G_{k,N}'G_k^{-1}G_{k,N} + G_{0,N}'G_0^{-1}G_{0,N}\\
D_0&=-G_0^{-1}G_{0,N}+G_{1,0}'G_1^{-1}G_{1,N}
\end{align*}

\subsubsection{Backward $CM_F$ Model ($c=N$)}\label{CML_Backward}

For $\eqref{CML_Dynamic_Backward}$--$\eqref{CML_Backward_BC1}$:
\begin{align*}
A_0=&(G_0^B)^{-1}\\
A_{k+1}=&(G_{k,k+1}^B)'(G_k^B)^{-1}G_{k,k+1}^B+(G_{k+1}^B)^{-1}, k \in [0,N-2]\\
A_N=&\sum _{k=0}^{N-2}(G_{k,N}^B)'(G_k^B)^{-1}G_{k,N}^B 
+ 4(G^{B}_{N-1,N})'(G_{N-1}^B)^{-1}G^{B}_{N-1,N} + (G_N^B)^{-1}\\
B_k=&-(G_k^B)^{-1}G_{k,k+1}^B, k \in [0,N-2] \\
B_{N-1}=&(G_{N-2,N-1}^B)'(G_{N-2}^B)^{-1}G_{N-2,N}^B
-2(G_{N-1}^B)^{-1}G^{B}_{N-1,N}\\
D_0=&-(G_0^B)^{-1}G_{0,N}^B\\
D_k=&(G_{k-1,k}^B)'(G_{k-1}^B)^{-1}G_{k-1,N}^B-(G_{k}^B)^{-1}G_{k,N}^B, k \in [1,N-2]
\end{align*}

\begin{appxlem}\label{CML_Dynamic}
Parameters of $CM_L$ model $\eqref{CML_Dynamic_Forward}$ along with $\eqref{CML_Forward_BC1}$ or $\eqref{CML_Forward_BC2}$ and backward $CM_F$ model $\eqref{CML_Dynamic_Backward}$--$\eqref{CML_Backward_BC1}$ of a zero-mean NG $CM_L$ sequence with the inverse of its covariance matrix equal to any given $CM_L$ matrix $\eqref{Block_CML}$ can be uniquely determined as follows:

(i) $CM_L$ model $\eqref{CML_Dynamic_Forward}$ ($c=N$):
\begin{align*}
&G_{N-1}^{-1}=A_{N-1}, \quad G_{1,0}=-G_1 B_0'\\
&G_{N-1,N}=-G_{N-1} B_{N-1} \\
&\left\{  \begin{array}{c}
k=N-1, \ldots , 2:  \quad \quad \quad \quad \quad \quad \quad \quad \quad \quad \quad \quad\, \, \\
G_{k,k-1}=-G_k B_{k-1}' \quad \quad  \quad \quad \quad \quad \quad \quad \quad \quad \quad \, \,  \\
G^{-1}_{k-1}=A_{k-1}-G_{k,k-1}'(G_{k})^{-1}G_{k,k-1} \quad \quad \quad \, \, \, \\
G_{k-1,N}=G_{k-1} G_{k,k-1}'G^{-1}_{k}G_{k,N} - G_{k-1}D_{k-1}
\end{array}\right.  
\end{align*}
Parameters of the boundary condition are: for $\eqref{CML_Forward_BC1}$
\begin{align*}
&G^{-1}_{N}=A_N - \sum _{k=1}^{N-1}G_{k,N}'G_{k}^{-1}G_{k,N}\\
&G_{N,0}=G_N G_{1,N}'G^{-1}_{1}G_{1,0} - G_N D_0' \\
&G_0^{-1}=A_0-G_{1,0}'G_1^{-1}G_{1,0}-G_{N,0}'G_N^{-1}G_{N,0} 
\end{align*}
and for $\eqref{CML_Forward_BC2}$: 
\begin{align*}
G_0^{-1}=&A_0-G_{1,0}'G_1^{-1}G_{1,0} \\
G_{0,N}=&G_0 G_{1,0}'G^{-1}_{1}G_{1,N} - G_0 D_0 \\
G^{-1}_{N}=&A_N - \sum _{k=1}^{N-1}G_{k,N}'G_{k}^{-1}G_{k,N} - G_{0,N}'G_0^{-1}G_{0,N} 
\end{align*}

(ii) Backward $CM_F$ model $\eqref{CML_Dynamic_Backward}$--$\eqref{CML_Backward_BC1}$ ($c=N$):
\begin{align*}
&(G_{0}^B)^{-1}=A_{0}\\
&\left\{  \begin{array}{c}
k=0,1, \ldots , N-2:  \quad \quad \quad \quad \quad \quad \quad \quad \quad \quad \, \, \\
G_{k,k+1}^B=-G_k^BB_{k} \quad \quad \quad \quad \quad \quad \quad \quad \quad  \quad \quad\\
(G_{k+1}^B)^{-1}=A_{k+1}-(G_{k,k+1}^B)'(G_{k}^B)^{-1}G_{k,k+1}^B \, \, \, \, \,
\end{array}\right. \\
&G^B_{0,N}=-G_{0}^B D_{0} \\
&\left\{  \begin{array}{c}
k=1,2,\ldots ,N-2: \quad \quad \quad \quad  \quad \quad \quad \quad \quad \quad  \quad  \quad \\
G_{k,N}^B=G_{k}^B(G_{k-1,k}^B)'(G_{k-1}^B)^{-1}G_{k-1,N}^B-G_{k}^BD_{k} \quad \,\, 
\end{array}\right. \\
&2G^{B}_{N-1,N}=G_{N-1}^B (G_{N-2,N-1}^B)'(G_{N-2}^B)^{-1}G_{N-2,N}^B - G_{N-1}^B B_{N-1} \\
&(G^B_{N})^{-1}=A_N - \sum _{i=0}^{N-2}(G_{i,N}^B)'(G_{i}^B)^{-1}G_{i,N}^B - 4(G^{B}_{N-1,N})'(G_{N-1}^B)^{-1}G^{B}_{N-1,N} 
\end{align*}

\end{appxlem}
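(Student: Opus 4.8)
The plan is to read the parameters directly off the factored inverse covariance $\eqref{C_Inverse_Forward_CML}$, namely $C^{-1}=(\mathcal{G}^L)'(G^L)^{-1}\mathcal{G}^L$, and to invert the resulting block relations recursively. Expanding this product block by block (the diagonal $I$'s, the subdiagonal $-G_{k,k-1}$'s, and the last column $-G_{k,N}$ of $\mathcal{G}^L$, against the block-diagonal $(G^L)^{-1}$) reproduces exactly the expressions for $A_k$, $B_k$, $D_k$ collected in Subsection \ref{CML_Forward}. Thus proving part (i) amounts to solving that system for $G_k$, $G_{k,k-1}$, $G_{k,N}$ and the boundary parameters $G_0$, $G_N$, $G_{N,0}$, given the block entries of the $CM_L$ matrix $\eqref{Block_CML}$, and to checking that each unknown is isolated so the solution is forced and hence unique. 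Existence of a valid solution is not in question, since $[x_k]$ is assumed $CM_L$ and so obeys a model $\eqref{CML_Dynamic_Forward}$ with positive definite $G_k=\text{Cov}(e_k)$; the recursion merely re-derives those parameters.

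First I would exploit the near-triangular structure to initialize. The only relation carrying a single unknown is $A_{N-1}=G_{N-1}^{-1}$, giving $G_{N-1}=A_{N-1}^{-1}$, the inverse existing because $C^{-1}$ is positive definite and hence so is its diagonal block $A_{N-1}$. Then $B_{N-1}=-G_{N-1}^{-1}G_{N-1,N}$ yields $G_{N-1,N}=-G_{N-1}B_{N-1}$. I would next run the backward sweep $k=N-1,\ldots,2$: the relation $B_{k-1}=-G_{k,k-1}'G_k^{-1}$ gives $G_{k,k-1}=-G_kB_{k-1}'$; the Schur-complement-type relation $A_{k-1}=G_{k-1}^{-1}+G_{k,k-1}'G_k^{-1}G_{k,k-1}$ gives $G_{k-1}^{-1}=A_{k-1}-G_{k,k-1}'G_k^{-1}G_{k,k-1}$; and $D_{k-1}$ together with the already-computed $G_{k,N}$ gives $G_{k-1,N}$. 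Every right-hand side uses only quantities already determined, and the pivots $G_k=\text{Cov}(e_k)$ are positive definite for the nonsingular sequence, so each inverse is legitimate and each step is uniquely forced.

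With all interior parameters in hand, the boundary equations are solved in turn: $A_N=G_N^{-1}+\sum_{k=1}^{N-1}G_{k,N}'G_k^{-1}G_{k,N}$ gives $G_N$, then $D_0$ gives $G_{N,0}$, and finally $A_0$ gives $G_0$ for boundary condition $\eqref{CML_Forward_BC1}$; the case $\eqref{CML_Forward_BC2}$ is handled the same way. Uniqueness throughout is immediate since each unknown appears isolated, and it also follows from uniqueness of conditional expectation as noted after Proposition \ref{Equivalent_Construction_Lemma}, the $CM_L$ parameters being regression coefficients of $x_k$ on $(x_{k-1},x_N)$. Part (ii) for the backward $CM_F$ model is the mirror image: its factored form with $\mathcal{G}^{BF}$ from $\eqref{F_B}$ reproduces the relations of Subsection \ref{CML_Backward}, and a forward sweep initialized by $A_0=(G_0^B)^{-1}$ inverts them.

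The hard part will be the bookkeeping rather than any conceptual difficulty. In the backward sweep the three assignments must be ordered so that $G_{k-1,N}$ is expressed only through $G_{k-1}$, $G_{k,k-1}$, $G_k$, $G_{k,N}$, and $D_{k-1}$; since the $D_k$ relations couple $G_{k,N}$ with $G_{k+1,k}$ and $G_{k+1,N}$, the indices must be tracked carefully so no quantity is used before it is computed. For the backward $CM_F$ model the additional care is the factors of $2$ and $4$ induced by the $-2G^B_{N-1,N}$ entry of $\mathcal{G}^{BF}$ in $\eqref{F_B}$, which propagate into the $B_{N-1}$ and $A_N$ relations and must be carried through consistently.
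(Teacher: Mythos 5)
Your proposal is correct and takes essentially the same route as the paper: the paper's own justification consists of the entry-matching relations for $C^{-1}=(\mathcal{G}^L)'(G^L)^{-1}\mathcal{G}^L$ (and its backward counterpart) listed in Subsections \ref{CML_Forward} and \ref{CML_Backward}, and the lemma is exactly the recursive inversion you describe, with uniqueness because each step isolates one unknown. Your ordering of the backward/forward sweeps, the positive-definiteness of the pivots, and the factors of $2$ and $4$ arising from the $-2G^{B}_{N-1,N}$ entry of $\eqref{F_B}$ all agree with the paper's treatment.
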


\subsection{$CM_F$ Sequences}\label{CMF}

\subsubsection{$CM_F$ Model ($c=0$)}

For $\eqref{CML_Dynamic_Forward}$--$\eqref{CML_Forward_BC1}$:
\begin{align*}
A_0=&G_0^{-1}+\sum _{k=2}^{N} G_{k,0}'(G_k)^{-1}G_{k,0}+4G_{1,0}'G_1^{-1}G_{1,0}\\
A_{k}=&G_{k+1,k}'(G_{k+1})^{-1}G_{k+1,k}+G_k^{-1}, k \in [1,N-1]\\
A_N=&G_N^{-1}\\
B_0=&G_{2,0}'G_2^{-1}G_{2,1}-2G_{1,0}'G_1^{-1}\\
B_k=&-G_{k+1,k}'(G_{k+1})^{-1}, k \in [1,N-1] \\
E_k=&G_{k+1,0}'G_{k+1}^{-1}G_{k+1,k}-G_{k,0}'G_{k}^{-1}, k \in [2,N-1] \\
E_{N}=&-G_{N,0}'G_N^{-1} 
\end{align*}

\subsubsection{Backward $CM_L$ Model ($c=0$)}

For model $\eqref{CML_Dynamic_Backward}$:     
\begin{align*}
A_1=&(G_1^B)^{-1}\\
A_{k}=&(G_{k-1,k}^B)'(G_{k-1}^B)^{-1}G_{k-1,k}^B+(G_k^B)^{-1}, k \in [2,N-1] \\
B_0=&-(G_{1,0}^B)'(G_{1}^B)^{-1}\\
B_k=&-(G_{k}^B)^{-1}G_{k,k+1}^B,  k \in [1,N-1]  \\
E_k=& (G_{k-1,0}^B)'(G_{k-1}^B)^{-1}G_{k-1,k}^B-(G_{k,0}^B)'(G_{k}^B)^{-1}, k \in [2,N-1]
\end{align*}
for boundary condition $\eqref{CML_Backward_BC1}$:
\begin{align*}
A_0=&(G_0^B)^{-1}+\sum _{k=1}^{N-1} (G_{k,0}^B)'(G_k^B)^{-1}G_{k,0}^B\\
A_N=&(G_{N-1,N}^B)'(G_{N-1}^B)^{-1} G_{N-1,N}^B+ (G_N^B)^{-1} +(G^{B}_{0,N})'(G_0^B)^{-1}G^{B}_{0,N} \\
E_{N}=&(G_{N-1,0}^B)'(G_{N-1}^B)^{-1}G_{N-1,N}^B-(G_0^B)^{-1}G^{B}_{0,N} 
\end{align*}
and for $\eqref{CML_Backward_BC2}$:
\begin{align*}
A_0=&(G_0^B)^{-1}+\sum _{k=1}^{N-1} (G_{k,0}^B)'(G_k^B)^{-1}G_{k,0}^B+(G^{B}_{N,0})'(G_N^B)^{-1}G^{B}_{N,0}\\
A_{N}=&(G_{N-1,N}^B)'(G_{N-1}^B)^{-1}G_{N-1,N}^B+(G_N^B)^{-1} \\
E_{N}=&(G_{N-1,0}^B)'(G_{N-1}^B)^{-1}G_{N-1,N}^B-(G^{B}_{N,0})'(G_N^B)^{-1} 
\end{align*}

\begin{appxlem}\label{CMF_Dynamic}
Parameters of $CM_F$ model $\eqref{CML_Dynamic_Forward}$--$\eqref{CML_Forward_BC1}$ and backward $CM_L$ model $\eqref{CML_Dynamic_Backward}$ along with $\eqref{CML_Backward_BC1}$ or $\eqref{CML_Backward_BC2}$ of a zero-mean NG $CM_F$ sequence with the inverse of its covariance matrix equal to any given $CM_F$ matrix $\eqref{CMF_Matrix}$ can be uniquely determined as follows:

(i) $CM_F$ model $\eqref{CML_Dynamic_Forward}$--$\eqref{CML_Forward_BC1}$:
\begin{align*}
&G_{N}^{-1}=A_{N}\\
&\left\{  \begin{array}{c}
k=N, N-1,\ldots , 2:  \quad  \quad \quad \quad \quad \quad \quad \, \\
G_{k,k-1}=-G_k B_{k-1}' \quad \quad \quad  \quad \quad \quad \quad  \, \, \, \\
G_{k-1}^{-1}=A_{k-1}-G_{k,k-1}'(G_{k})^{-1}G_{k,k-1} \,
\end{array}\right. \\
&G_{N,0}=-G_{N} E_{N}' \\
&\left\{  \begin{array}{c}
k=N-1,N-2,\ldots ,2: \quad \quad \quad \quad \quad \, \,  \\
G_{k,0}=G_{k}G_{k+1,k}'G_{k+1}^{-1}G_{k+1,0}-G_{k}E_{k}' 
\end{array}\right. \\
&2G_{1,0}=G_1G_{2,1}'G_2^{-1}G_{2,0}-G_1B_0'\\
&G_0^{-1}=A_0-\sum _{k=2}^{N} G_{k,0}'G_k^{-1}G_{k,0}-4G_{1,0}'G_1^{-1}G_{1,0} 
\end{align*}

(ii) Backward $CM_L$ model $\eqref{CML_Dynamic_Backward}$ ($c=0$):
\begin{align*}
&(G_{1}^B)^{-1}=A_{1}\\
&\left\{  \begin{array}{c}
k=1, 2,\ldots , N-2: \quad \quad \quad \quad  \quad \quad \quad \quad \quad \quad \\
G_{k,k+1}^B=-G_k^BB_{k} \quad \quad \quad \quad \quad \quad  \quad \quad \quad \quad \, \, \,  \\
(G_{k+1}^B)^{-1}=A_{k+1}-(G_{k,k+1}^B)'(G_{k}^B)^{-1}G_{k,k+1}^B \,
\end{array}\right. \\
&G_{N-1,N}^B=-G_{N-1}^BB_{N-1}\\
&G_{1,0}^B= -G_{1}^B B_{0}' \\
&\left\{  \begin{array}{c}
k=2,\ldots ,N-1: \quad \quad \quad \quad  \quad \quad \quad \quad \quad \quad  \quad  \quad \\
G_{k,0}^B=G_{k}^B (G_{k-1,k}^B)'(G_{k-1}^B)^{-1}G_{k-1,0}^B - G_{k}^B E_k' \, \, \, \, \,
\end{array}\right. 
\end{align*}
Parameters of the boundary condition are: for $\eqref{CML_Backward_BC1}$
\begin{align*}
&(G_{0}^B)^{-1}=A_0 - \sum _{k=1}^{N-1}(G^B_{k,0})'(G_{k}^B)^{-1}G^B_{k,0}\\
&G^{B}_{0,N}=G_0^B (G_{N-1,0}^B)'(G_{N-1}^B)^{-1}G_{N-1,N}^B - G_0^B E_{N} \\
&(G_N^B)^{-1}=A_N-(G_{N-1,N}^B)'(G_{N-1}^B)^{-1}G_{N-1,N}^B -(G^{B}_{0,N})'(G_0^B)^{-1}G^{B}_{0,N} 
\end{align*}
and for $\eqref{CML_Backward_BC2}$:
\begin{align*}
(G_N^B)^{-1}=&A_N-(G_{N-1,N}^B)'(G_{N-1}^B)^{-1}G_{N-1,N}^B \\
(G_{0}^B)^{-1}=&A_0 - \sum _{k=1}^{N-1}(G^B_{k,0})'(G_{k}^B)^{-1}G^B_{k,0}
- (G^{B}_{N,0})'(G_N^B)^{-1}G^{B}_{N,0} \\
G^{B}_{N,0}=&G_N^B (G^B_{N-1,N})'(G_{N-1}^B)^{-1}G^B_{N-1,0}-G_N^B E_{N}' 
\end{align*}

\end{appxlem}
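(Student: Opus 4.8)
The plan is to prove the lemma by inverting, via a recursion, the map that sends the model parameters to the block entries of the $CM_F$ matrix \eqref{CMF_Matrix}. That forward map is already at hand: from the decomposition $C^{-1}=(\mathcal{G}^F)'(G^F)^{-1}\mathcal{G}^F$ in \eqref{C_Inverse_Forward_CMF} for the $CM_F$ model \eqref{CML_Dynamic_Forward}--\eqref{CML_Forward_BC1}, and $C^{-1}=(\mathcal{G}^{BL})'(G^{BL})^{-1}\mathcal{G}^{BL}$ in \eqref{C_Inverse_Reciprocal} for the backward $CM_L$ model \eqref{CML_Dynamic_Backward}, a direct block multiplication expresses the entries $A_k$, $B_k$, and $E_k$ of the $CM_F$ matrix in terms of the parameters; these are exactly the expressions listed in the two subsections immediately preceding the lemma. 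It therefore remains to solve these relations uniquely for the parameters and to show that the solution is the stated recursion.

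For part (i) I would run the recursion from the corner inward. The relation $A_N=G_N^{-1}$ fixes $G_N$; then, sweeping $k=N,\dots,2$, the band relation $B_{k-1}=-G_{k,k-1}'G_k^{-1}$ gives $G_{k,k-1}=-G_kB_{k-1}'$, after which $A_{k-1}=G_{k,k-1}'G_k^{-1}G_{k,k-1}+G_{k-1}^{-1}$ yields $G_{k-1}^{-1}$. With the band (Markov-like) parameters known, the first-row entries are inverted in the same downward order: $E_N=-G_{N,0}'G_N^{-1}$ gives $G_{N,0}$, the $E_k$ relations for $k=N-1,\dots,2$ give each $G_{k,0}$, the $B_0$ relation gives $G_{1,0}$, and finally the $A_0$ relation closes the system by determining $G_0$. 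At every step the unknown is flanked by the positive-definite, hence invertible, blocks $G_k$, so each linear equation has a unique solution.

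Part (ii) is entirely analogous but sweeps upward: $A_1=(G_1^B)^{-1}$ starts the recursion, and for $k=1,\dots,N-2$ the pair $(B_k,A_{k+1})$ peels off $G_{k,k+1}^B$ and then $G_{k+1}^B$, while $B_{N-1}$ and $B_0$ give $G_{N-1,N}^B$ and $G_{1,0}^B$ and the $E_k$ relations give the couplings $G_{k,0}^B$. The only branching is at the two endpoints, where boundary condition \eqref{CML_Backward_BC1} and \eqref{CML_Backward_BC2} enter $A_0$, $A_N$, and $E_N$ differently; these I would treat as two separate closings, solving for $G_0^B$, $G_N^B$, and whichever of $G^{B}_{0,N}$, $G^{B}_{N,0}$ is present, exactly as stated. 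Uniqueness again follows from invertibility of the $G_k^B$ blocks, or alternatively from the general uniqueness already established after \eqref{Step_1}.

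The main obstacle is bookkeeping rather than analysis: one must eliminate the ``arrow'' first-row/column entries jointly with the tridiagonal band in the correct order, and correctly account for the anomalous factors $2$ and $4$ arising at the $k=0,1$ boundary of part (i) from the doubled entry $-2G_{1,0}$ of $\mathcal{G}^F$ (which makes the $B_0$ and $A_0$ relations carry the coefficients $2$ and $4$). Checking that these factors invert consistently, and that both decompositions indeed produce a matrix of the $CM_F$ form \eqref{CMF_Matrix} so that each recursion is well defined, is where care is needed; the remaining steps are routine back-substitution mirroring Appendix Lemma \ref{CML_Dynamic}.
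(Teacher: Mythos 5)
Your proposal is correct and matches the paper's own justification: the paper proves this lemma implicitly by listing the forward maps (the expressions for $A_k$, $B_k$, $E_k$ in terms of the model parameters, obtained from $C^{-1}=T'P^{-1}T$) and then inverting them by exactly the corner-inward/upward back-substitution you describe, with uniqueness from invertibility of the $G_k$, $G_k^B$ blocks. Your handling of the factors $2$ and $4$ from the doubled entry $-2G_{1,0}$ of $\mathcal{G}^F$, and of the two boundary-condition closings via $A_0$, $A_N$, $E_N$, is precisely what the stated recursions encode.
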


\subsection{Reciprocal Sequences}\label{Reciprocal}

For reciprocal model $\eqref{Reciprocal_Dynamic}$ along with $\eqref{Reciprocal_Cyclic_BC1}$--$\eqref{Reciprocal_Cyclic_BC2}$:
\begin{align}
R^0_k&=A_k, k \in [0,N]\label{R1-App}\\
R^+_k&=(R^-_{k+1})'=-B_k, k \in [0,N-1]\\
R^-_0&=(R^+_N)'=-D_0
\end{align} 
Model $\eqref{Reciprocal_Dynamic}$ with $\eqref{Reciprocal_Dirichlet_BC1}$ or $\eqref{Reciprocal_Dirichlet_BC2}$ was discussed in Section \ref{Discussion}.

\subsection{Markov Sequences}\label{Markov}

\subsubsection{Markov Model $\eqref{Markov_Dynamic_Forward}$}

\begin{align*}
A_0=&M_0^{-1}+M_{1,0}'M_1^{-1}M_{1,0}, \quad A_N=M_{N}^{-1}\\
A_k=&M_{k}^{-1}+M_{k+1,k}'M_{k+1}^{-1}M_{k+1,k}, k \in [1,N-1]\\
B_k=&-M_{k+1,k}'M_{k+1}^{-1}, k \in [0,N-1] 
\end{align*}   

\subsubsection{Backward Markov Model $\eqref{Markov_Dynamic_Backward}$}

\begin{align*}
A_0=&(M^B_0)^{-1}\\
A_k=&(M^B_{k})^{-1}+(M^B_{k-1,k})'(M^B_{k-1})^{-1}M^B_{k-1,k}, k \in [1,N-1]\\
A_N=&(M^B_N)^{-1}+(M^B_{N-1,N})'(M^B_{N-1})^{-1}M^B_{N-1,N}\\
B_k=&-(M^B_{k})^{-1}M^B_{k,k+1}, k \in [0,N-1]
\end{align*}   

\begin{appxlem}\label{Markov_Dynamic}
Parameters of Markov model $\eqref{Markov_Dynamic_Forward}$ and backward Markov model $\eqref{Markov_Dynamic_Backward}$ of a zero-mean NG Markov sequence with the inverse of its covariance matrix equal to any given symmetric positive definite (block) tri-diagonal matrix can be uniquely determined as follows:

(i) Markov model $\eqref{Markov_Dynamic_Forward}$:
\begin{align*}
&M_{N}^{-1}=A_N, \quad M_{N,N-1}=-M_{N}B_{N-1}'\\
&\left\{  \begin{array}{c}
k=N-2, N-3, \ldots , 0:  \quad \quad \quad \quad \quad \quad \quad \quad \quad \quad \, \,\, \, \\
M_{k+1}^{-1}=A_{k+1}-M_{k+2,k+1}'M_{k+2}^{-1}M_{k+2,k+1} \quad \quad  \quad \, \, \, \\
M_{k+1,k}=-M_{k+1} B_k' \quad \quad \quad \quad \quad \quad \quad \quad \quad \quad \quad \quad \, \, \, \, \,
\end{array}\right.  \\
&M_0^{-1}=A_0-M_{1,0}'M_{1}^{-1}M_{1,0}
\end{align*}

(ii) Backward Markov model $\eqref{Markov_Dynamic_Backward}$:
\begin{align*}
&(M^B_0)^{-1}=A_0, \quad M^B_{0,1}=-M^B_0 B_0\\
&\left\{  \begin{array}{c}
k=2,3,\ldots ,N: \quad \quad \quad \quad \quad \quad \quad \quad \quad \quad  \quad \quad \quad \quad \quad \, \\
(M^B_{k-1})^{-1}=A_{k-1}-(M^B_{k-2,k-1})'(M^B_{k-2})^{-1}M^B_{k-2,k-1}\, \\
M^B_{k-1,k}=-M^B_{k-1}B_{k-1} \quad \quad \quad \quad \quad \quad \quad \quad \quad \quad \quad \quad \, \, \, 
\end{array}\right. \\
&(M^B_N)^{-1}=A_N-(M^B_{N-1,N})'(M^B_{N-1})^{-1}M^B_{N-1,N}
\end{align*}

\end{appxlem}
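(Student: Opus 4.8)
The plan is to recognize the claimed recursion as the block $UDU'$ factorization (equivalently, an $LDL'$ factorization read from the appropriate corner) of $C^{-1}$ that is already guaranteed by $\eqref{C_Inverse_Markov_Forward}$, and to verify that carrying out this factorization blockwise reproduces the stated formulas. Because the sequence is NG Markov, it already possesses a forward model $\eqref{Markov_Dynamic_Forward}$, so by $\eqref{C_Inverse_Markov_Forward}$ we may write $C^{-1}=\mathcal{M}'M^{-1}\mathcal{M}$ with $\mathcal{M}$ the unit lower-bidiagonal matrix of $\eqref{Mxe-Markov}$ and $M=\mathrm{diag}(M_0,\ldots,M_N)$ block-diagonal and positive definite. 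This exhibits $C^{-1}$ as a product (unit upper-bidiagonal)(block-diagonal PD)(unit lower-bidiagonal); the tri-diagonal sparsity of $\eqref{Tridiagonal}$ is precisely what such a factorization with a bidiagonal factor produces, so no extra fill-in blocks appear. The entries $A_k,B_k$ of $C^{-1}$ expressed through the model parameters are exactly the relations already listed in Subsection \ref{Markov}.

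First I would take those relations and solve them for the parameters, processing from the bottom-right block upward. The diagonal relation $A_N=M_N^{-1}$ gives $M_N$ outright; the relation $B_{N-1}=-M_{N,N-1}'M_N^{-1}$ then gives $M_{N,N-1}=-M_N B_{N-1}'$; and each interior diagonal relation $A_k=M_k^{-1}+M_{k+1,k}'M_{k+1}^{-1}M_{k+1,k}$ rearranges to $M_k^{-1}=A_k-M_{k+1,k}'M_{k+1}^{-1}M_{k+1,k}$, while $B_{k-1}=-M_{k,k-1}'M_k^{-1}$ gives $M_{k,k-1}=-M_k B_{k-1}'$. Induction downward in $k$ yields exactly part (i). For the backward model I would run the identical argument starting from $\eqref{C_Inverse_Markov_Backward}$, $C^{-1}=(\mathcal{M}^B)'(M^B)^{-1}\mathcal{M}^B$, where $\mathcal{M}^B$ of $\eqref{MB}$ is unit upper-bidiagonal; here the factorization is read from the top-left corner, starting with $(M_0^B)^{-1}=A_0$ and proceeding upward in $k$, which reproduces part (ii).

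The step I expect to be the main obstacle — the one requiring care rather than bookkeeping — is verifying that the recursion never breaks down, i.e., that every block pivot $M_k$ (equivalently $M_k^{-1}$) produced along the way is positive definite and hence invertible, so that the next step is well defined. I would handle this through the Schur-complement interpretation of the factorization: at each stage the quantity $A_k-M_{k+1,k}'M_{k+1}^{-1}M_{k+1,k}$ is the Schur complement of a trailing principal submatrix of the positive definite matrix $C^{-1}$, and Schur complements of PD matrices are themselves PD. Equivalently, since the sequence is NG, these pivots coincide with the genuine dynamic-noise covariances $M_k=\mathrm{Cov}(e^M_k)$, which are positive definite by hypothesis; this simultaneously guarantees well-definedness of every inversion and shows the recovered quantities are the actual model parameters.

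Finally, uniqueness follows because the $UDU'$ factorization of a positive definite matrix with unit-triangular $U$ and block-diagonal $D$ is unique, so the parameters read off are forced; alternatively, as already noted in the paper, uniqueness is immediate from the fact that the Markov parameters are determined by conditional expectations of the given distribution. The same two remarks apply verbatim to the backward model, completing the proof of both parts.
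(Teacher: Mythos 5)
Your proposal is correct and takes essentially the same route as the paper: the appendix derivation consists precisely of expressing the blocks $A_k,B_k$ of $C^{-1}$ through the factorizations $\eqref{C_Inverse_Markov_Forward}$ and $\eqref{C_Inverse_Markov_Backward}$ (as listed in Subsection \ref{Markov}) and then inverting those relations recursively, bottom-up for the forward model and top-down for the backward one, exactly as you do. Your explicit Schur-complement argument for positive definiteness of the pivots and your appeal to uniqueness of the block $UDU'$/$LDL'$ factorization only make rigorous what the paper leaves implicit (it remarks that uniqueness follows from $\eqref{Step_1}$ and from conditional expectations), so nothing essential differs.
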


\section{Algebraically Equivalent Models}\label{A2}

Following $\eqref{Step_2}$, relationships between some AE models are presented.

\subsection{Reciprocal Model $\eqref{Reciprocal_Dynamic}$ and Markov Model $\eqref{Markov_Dynamic_Forward}$}

\setlength\abovedisplayskip{0pt}

\begin{align*}
e^R_0=&M_0^{-1}e^M_0-M_{1,0}'M_1^{-1}e^M_1, \quad e^R_N=M_{N}^{-1}e^M_{N}\\
e^R_k=&M_{k}^{-1}e^M_{k}-M_{k+1,k}'M_{k+1}^{-1}e^M_{k+1}, k \in [1,N-1]
\end{align*}

\subsection{$CM_L$ Model and Markov Model $\eqref{Markov_Dynamic_Forward}$}

(i) $CM_L$ model $\eqref{CML_Dynamic_Forward}$--$\eqref{CML_Forward_BC1}$ ($c=N$):
\begin{align}
&G_0^{-1}e_0-G_{1,0}'G_1^{-1}e_1-G_{N,0}'G_N^{-1}e_N=  M_0^{-1}e^M_0-M_{1,0}'M_1^{-1}e^M_1\label{ML_1}\\
&G_k^{-1}e_k-G_{k+1,k}'G_{k+1}^{-1}e_{k+1}=M_{k}^{-1}e^M_{k}-  M_{k+1,k}'M_{k+1}^{-1}e^M_{k+1}, k \in [1,N-2] \label{ML_2}\\
&G_{N-1}^{-1}e_{N-1}=M_{N-1}^{-1}e^M_{N-1}-M_{N,N-1}'M_N^{-1}e^M_{N}\label{ML_3}\\
&-\sum _{k=1}^{N-1}G_{k,N}'G_k^{-1}e_k+G_N^{-1}e_N=M_{N}^{-1}e^M_{N}\label{ML_4}
\end{align}

(ii) $CM_L$ model $\eqref{CML_Dynamic_Forward}$ and $\eqref{CML_Forward_BC2}$: we have $\eqref{ML_2}$, $\eqref{ML_3}$, and
\begin{align}
&G_0^{-1}e_0-G_{1,0}'G_1^{-1}e_1=M_0^{-1}e^M_0-M_{1,0}'M_1^{-1}e^M_1\\
&-\sum _{k=1}^{N-1}G_{k,N}'G_k^{-1}e_k+G_N^{-1}e_N-G_{0,N}'G_0^{-1}e_0=M_{N}^{-1}e^M_{N}
\end{align}

\subsection{$CM_F$ Model $\eqref{CML_Dynamic_Forward}$ and Reciprocal Model $\eqref{Reciprocal_Dynamic}$}

\setlength\abovedisplayskip{0pt}

\begin{align*}
e^R_0=&G_0^{-1}e_0-2G_{1,0}'G_1^{-1}e_1-\sum _{k=2}^{N} G_{k,0}'G_k^{-1}e_k\\
e^R_1=&G_1^{-1}e_1-G_{2,1}'G_{2}^{-1}e_{2}\\
e^R_k=&G_k^{-1}e_k-G_{k+1,k}'G_{k+1}^{-1}e_{k+1}, k \in [2,N-1]\\
e^R_N=&G_N^{-1}e_N
\end{align*}

\subsection{$CM_L$ Model and Backward $CM_F$ Model $\eqref{CML_Dynamic_Backward}$}

(i) $CM_L$ model $\eqref{CML_Dynamic_Forward}$--$\eqref{CML_Forward_BC1}$: we have

\begin{align}
&(G^B_0)^{-1}e^{B}_0=G_0^{-1}e_0-G_{1,0}'G_1^{-1}e_1-G_{N,0}'G_N^{-1}e_N\label{Lfb_1}\\
&-(G^B_{k-1,k})'(G_{k-1}^B)^{-1}e^{B}_{k-1}+(G^B_k)^{-1}e^{B}_k =G_k^{-1}e_k- G_{k+1,k}'G_{k+1}^{-1}e_{k+1}, k \in [1,N-2]\label{Lfb_2}\\
&-(G^B_{N-2,N-1})'(G^B_{N-2})^{-1}e^{B}_{N-2}+
(G^B_{N-1})^{-1}e^{B}_{N-1}= G_{N-1}^{-1}e_{N-1}\label{Lfb_3}
\end{align}
\begin{align}
&\sum _{k=0}^{N-2} (G^B_{k,N})'(G^B_k)^{-1}e^{B}_k + 2(G^{B}_{N-1,N})'(G^B_{N-1})^{-1}e^{B}_{N-1}-\nonumber \\
& \quad \quad \quad (G^B_N)^{-1}e^{B}_N=\sum _{k=1}^{N-1}
 G_{k,N}'G_k^{-1}e_k+G_N^{-1}e_N\label{Lfb_4}
\end{align}

(ii) $CM_L$ model $\eqref{CML_Dynamic_Forward}$ and $\eqref{CML_Forward_BC2}$: we have $\eqref{Lfb_2}$, $\eqref{Lfb_3}$, and
\begin{align}
&(G^B_0)^{-1}e^{B}_0=G_0^{-1}e_0-G_{1,0}'G_1^{-1}e_1 \\
&(G^B_N)^{-1}e^{B}_N-2(G^{B}_{N-1,N})'(G^B_{N-1})^{-1}e^{B}_{N-1}- \sum _{k=0}^{N-2} (G^B_{k,N})'(G^B_k)^{-1}e^{B}_k \nonumber\\
&=-\sum _{k=1}^{N-1} G_{k,N}'G_k^{-1}e_k+G_N^{-1}e_N-G_{0,N}'G_0^{-1}e_0
\end{align}


\end{document}